  \tikzstyle{vec}=[circle,inner sep=1pt,outer sep=-1pt,fill]
  \tikzstyle{border}=[thick]
  \tikzstyle{favborder}=[border,dotted]
  \tikzstyle{exclborder}=[border,dashed]
  \tikzstyle{vec}=[circle,inner sep=1pt,outer sep=-1pt,fill]
  \tikzstyle{border}=[thick]
  \tikzstyle{favborder}=[border,dotted]
  \tikzstyle{exclborder}=[border,dashed]
\newenvironment{tbs}{%
   \small\tt
   \begin{enumerate}[$\blacktriangleright$]}{\end{enumerate}}
\newcommand{\btbs}{\begin{tbs}}                                                                      
\newcommand{\etbs}{\end{tbs}}
\newcommand{\Reals}{\mathbb{R}} 
\newcommand{\complex}{\mathbb{C}}
 \DeclareMathOperator{\posi}{posi}
\newcommand{\pspace}{\Omega}
\newcommand{\reals}{\Reals}
\newcommand{\gambles}{\mathcal{L}}
\newcommand{\He}{\mathcal{H}}
\newcommand{\domain}{\mathcal{K}}
\newcommand{\sign}{\mathrm{sign}}
\newcommand{\bdomain}{\mathcal{C}}
\newcommand{\nonnegative}{\gambles^{\geq} }
\newcommand{\bnonnegative}{\Sigma^{\geq}}
\newcommand{\negative}{\gambles^{<} }
\title[Quantum indistinguishability]{Quantum indistinguishability
through exchangeable desirable gambles
}
\author{
  \Name{Alessio Benavoli}\Email{alessio.benavoli@tcd.ie}\\ 
  \addr School of Computer Science and Statistics, Trinity College Dublin, Ireland
  \AND
  \Name{Alessandro Facchini}\Email{alessandro.facchini@idsia.ch}\\
    \Name{Marco Zaffalon}\Email{zaffalon@idsia.ch}\\
  \addr IDSIA (USI-SUPSI), Lugano, Switzerland
}
\begin{document}
\maketitle

\begin{abstract}
  Two particles are identical if all their intrinsic properties, such as spin and charge, are the same, meaning that no quantum experiment can distinguish them. In addition to the well known principles of quantum mechanics, understanding systems of identical particles requires a new postulate, the so called \emph{symmetrization postulate}. 
  In this work, we show that the postulate corresponds to exchangeability assessments for sets of  observables (gambles) in a quantum experiment, when quantum mechanics is seen as a normative and algorithmic theory guiding an agent to assess her subjective beliefs represented as (coherent) sets of gambles. Finally, we show how  sets of exchangeable observables (gambles) may be updated after a measurement and discuss the issue of defining entanglement for indistinguishable particle systems. 
\end{abstract}
\begin{keywords}
  quantum theory, indistinguishable particles, exchangeability, desirable gambles.
  \end{keywords}

\section{Introduction}
In recent works \citep{benavoli2016quantum,benavoli2017gleason} and in particular in  \citep{Benavoli2019bb},  we defined a theory of probability on a continuous space of complex vectors that complies with the two postulates of coherence (``The theory should be logically consistent''), and of computation (``Inferences in the theory should be computable in polynomial time''). We then showed that its deductive closure is 
tantamount to Quantum Mechanics (QM).  Hence QM may be viewed as a normative and algorithmic theory guiding an agent to assess her subjective beliefs represented as (coherent) sets of gambles 
on the results of a quantum experiment.
We were then able to derive (in a coherent way) the main postulates of QM from standard operations in probability theory (updating, marginalisation, time coherence). This means we derived a theory of probability which  theoretically and empirically agrees with QM experiments.

When one considers systems including  more than one particle, we must consider the implications of another important empirical observation: in
many of these systems the particles of interest belong to distinct classes of indistinguishable (identical) particles. Two (or more) particles are said to be identical if all their  properties (charge, mass, spin, etc.) 
are exactly the same. In other words, no experiment can distinguish one from the other.
Hence, all the electrons in the universe are identical, as are all the protons. 
This means that, when a physical
system contains two identical particles, there is no change in its properties if the roles of these two particles are exchanged.

This law is formulated in QM by the \textit{symmetrization postulate}, which establishes that in a system containing identical particles the only possible configurations of their
properties (e.g., spin) are either all symmetrical or all antisymmetrical with respect
to permutations of the labels of the particles. In the first case, the particles are called \textit{bosons}; in the second case they are called \textit{fermions}.

In this paper, we aim to derive the \textit{symmetrization postulate} from the way a subject accepts gambles on experiments involving indistinguishable particles.
We assume that the particles are \textit{exchangeable}, meaning roughly that the subject believes that the labels (i.e.\ electron 1, electron 2,..) we use to denote them, has no influence on the decisions and inferences she will make regarding the particles. 

Exchangeability is a fundamental concept in classical probability theory and statistics \citep{diaconis1980finite,regazzini1991coherence}. Its assumption, and the analysis of its consequences, goes back to \cite{finetti1974-5} and his famous \textit{Representation Theorem}. In statistics, this theorem is interpreted as stating that ``a sequence of random variables is exchangeable if it is conditionally independent and identically distributed.''
This theorem was generalised to QM by \cite{caves2002unknown} for \textit{quantum-state tomography}, which is a technique to estimate the density matrix of a particle by performing repeated measures (the order of the measures is assumed to be exchangeable).

In this paper, we instead deal with the exchangeability of indistinguishable particles. We show that we can derive the \textit{symmetrization postulate} by using the general framework for exchangeable gambles proposed by \citet{de2012exchangeability} for classical (imprecise) probability theory.\footnote{Exchangeability in the context of 
imprecise probability was originally proposed by \cite[Sec.\,9.5]{walley1991}}
This confirms, once again, that QM is a subjective theory of probability.

The rest of the paper is organised as follows. In Section \ref{sec:algorat} we recall how QM can be formulated, and thus understood, as an algorithmic theory of desirable gambles. After formulating in Section \ref{sec:symprinc} the symmetrisation postulate, in the following Section \ref{sec: exchange} we derive it in terms of (algorithmic) coherence and exchangeability. 
Finally, in Section \ref{sec:updat} we show how sets of exchangeable observables (gambles) may be updated after a measurement and in Section \ref{sec:ent} we discuss the issue of defining entanglement for indistinguishable particle systems

\section{Algorithmic rationality and QM}\label{sec:algorat}
 In this section, we recall some definition and results from  \citep{Benavoli2019bb}.
Consider a systems of $m$ particles (each one is an $n_j$-level system, for instance if we consider the spin of an electron $n_j=2$: the spin can be ``up'' or ``down''). When $m >1$ the system is said to be composite, whereas in case $m=1$ we are considering a single particle system. Hence, the  possibility space is
 $$
 \pspace=\times_{j=1}^m \overline{\complex}^{n_j}.
 $$
 where
 $$
\overline{\complex}^{n_j}=\{ x\in \complex^{n_j}: ~x^{\dagger}x=1\}.
$$
Next, we describe the observables, the gambles in our setting. Let us recall that in QM any real-valued observable is described by a Hermitian operator (matrix).
This naturally imposes restrictions on the type of `permitted gambles' $g$ on  a quantum experiment.
For a single particle, given a Hermitian operator $G\in \He^{n\times n}$
(with $\He^{n\times n}$ being the set of Hermitian matrices of dimension $n \times n$), a gamble on $x \in \overline{\complex}^{n}$  can be defined as:
$$
g(x)=x^\dagger G x.
$$
Since $G$ is Hermitian and  $x$ is bounded ($x^{\dagger}x=1$), $g$ is a real-valued bounded function. For a composite system of $m$ particles,  the gambles  are $m$-quadratic forms:
\begin{equation}
 \label{eq:gamble_many}
 g(x_1,\dots,x_m)=(\otimes_{j=1}^m x_j)^\dagger G (\otimes_{j=1}^m x_j),
\end{equation}
with $G \in \He^{n \times n}$, $n=\prod_{j=1}^m n_j$, and where $\otimes$ denotes the tensor product between vectors regarded as column matrices.\footnote{Why the tensor product? In classical probability, structural assessments of independence/dependence are expressed via expectations on factorised gambles $g(x_1,\dots,x_m)=\prod_{j=1}^m g_j(x_j)$.
This factorised gamble can equivalently be written as \eqref{eq:gamble_many}, see \citep{Benavoli2019bb} for more details.}
Therefore, we have that
\begin{align*}
 \mathcal{L}_R = &\{g
 \mid 
 ~ G \in \He^{n \times n}
 \}
\end{align*}
is the restricted set of `permitted gambles' in a quantum experiment. We can  also define the subset of nonnegative gambles $\nonnegative_R\coloneqq \{g \in \gambles_R \mid \min g\geq0 \}$ and the subset of negative gambles  $\negative_R\coloneqq \{g \in \gambles_R \mid \max g < 0 \}$.\footnote{Notice that, since $g$ is a polynomial and $\pspace$ is bounded, $\min g = \inf g$ and $\max g = \sup g$.}

Since  $ \mathcal{L}_R$ is a vector space including the constant gambles ($G=cI$ with $I$ identity matrix),\footnote{The  constant functions take the form $g(x_1,\dots,x_m)=c (\otimes_{j=1}^m x_j)^\dagger I (\otimes_{j=1}^m x_j)=c$.} we can use standard desirability to impose
rationality principles (coherence) in the way a subject should accept gambles. However, this would not lead to QM. Indeed, as discussed in the Introduction, QM follows by the two principles of coherence and of computation.\footnote{ 
QM is a theory of bounded (algorithmic) rationality 
\citep{zaffalon2019b,Benavoli2019b}.  Generalised types of coherence were described in some detail in \citep{quaeghebeur2015accept}.}

As shown by \cite{gurvits2003classical}, for $m>1$ the problem of deciding whether a gamble is nonnegative, that is whether it belongs to $\nonnegative_R$, is NP-hard, thus leading to a violation of the aforementioned computation principle.\footnote{The infimum coincides with the minimum because gambles are bounded polynomials.}  
To fulfil the computation requirement, we therefore need to
change the meaning of `being nonnegative'   by considering a subset $\bnonnegative \subsetneq \nonnegative$ for which the membership problem is in P.
This is done by considering the following new set of ``tautologies'':
\begin{align*}
\Sigma^{\geq}\coloneqq\{g \in  \mathcal{L}_R
\mid G\geq0\}.
\end{align*}
That is, a gamble is `nonnegative' whenever $G$ is PSD. Note that $\Sigma^{\geq}$ is the so-called cone of \emph{Hermitian sum-of-squares} polynomials.

 What described above is the essence of the algorithmic rationality behind QM. In other words, the corresponding algorithmic theory of desirable gambles 
 is based on the following redefinition of the tautologies:
\begin{itemize}
\item $\bnonnegative$ should always be desirable,
\end{itemize}
The rest of the theory follows exactly the footprints of the standard theory of (almost) desirability. In particular, the deductive closure for a finite\footnote{In case of arbitrary set of assessments, we simply ask in addition for $\bdomain$ to be topologically closed.}  set of assessment $\mathcal{G}$ is defined by:
\footnote{`$\posi(\mathcal{A})$' denotes the conic  hull of a set of gambles $\mathcal{A}$. It is defined as $\posi(\mathcal{A})=\{\sum_i \lambda_ig_i: \lambda_i\in \reals^{\geq}, g_i\in \mathcal{A}\}$.}
\begin{itemize}
\item $\bdomain\coloneqq \posi(\bnonnegative \cup \mathcal{G})$.
\end{itemize}
And finally the coherence postulate simply states that 
\begin{itemize}
\item A set $\bdomain$ of desirable gambles is said to be \emph{A-coherent}  if and only if
$-1 \notin \bdomain$,
\end{itemize}
where `$A$' stands for the the fact that 
the algorithmic bounds of the coherence problem for a finite set of assessments are established according to the choice of $\bnonnegative$.

\begin{remark}
 In classical coherence, the tautologies are the set of all nonnegative gambles $\mathcal{L}^{\geq}_R$. This is the only difference w.r.t.\ QM. The classical axioms of desirability are: (i) $\mathcal{L}^{\geq}_R$ should always be desirable;
 (ii) $\domain\coloneqq \posi(\mathcal{L}^{\geq}_R \cup \mathcal{G})$; (iii) $-1 \notin \domain$. However, evaluating if a gamble belongs to $\mathcal{L}^{\geq}_R$ is NP-hard as discussed previously. 
 \end{remark}
 
 \begin{remark} There are different notions of desirability (almost, strict, real \citep{walley1991}); here we use the term  desirability for  \text{almost desirability}. A-coherence is an instance of almost desirability.
\end{remark}

We can finally associate a `probabilistic' interpretation through the dual of an A-coherent set. 
Let us consider  the dual space $\gambles_R^*$ of all bounded linear functionals $L: \gambles_R \rightarrow \reals$. With the additional condition that linear functionals preserve the unitary gamble, the dual cone of an A-coherent $\bdomain\subset \gambles_R$  is given by

\begin{equation}
\label{eq:dualL}
\bdomain^\circ:=\left\{L \in \mathsf{S} \mid L(g)\geq0, ~\forall g \in \mathcal{G}\right\},
\end{equation}
where $\mathsf{S}=\{L \in \gambles_R^* \mid L(1)=1,~~L(h)\geq0 ~~\forall h \in \bnonnegative\}$ is the set of states.
It is not difficult to prove that $\bdomain^\circ$ can actually equivalently be defined as:
\begin{align}
\label{eq:credaldef}
\mathcal{M}&:=\{ \rho \in \mathscr{S} \mid Tr(G \rho) \geq  0,~ ~\forall g \in \mathcal{G}\},
\end{align} 
where  $\mathscr{S}=\{ \rho \in \He^{n\times n} \mid \rho\geq0,~~Tr(\rho)=1\}$ is the set of all density matrices
and gambles $g$ are defined as in \eqref{eq:gamble_many}
and are essentially specified by the Hermitian matrix $G$.  We also show that they are generalised moment\footnote{In classical probability, given a (real) variable $x$   and an expectation operator $E$, the n-th (non-central) moment of $x$ is defined as $m_n:=E[x^n]$ (we can also define multivariate moments, e.g., $E[x_1^nx_2^m]$). Given a  sequence of moments $m_0,m_1,m_2,\dots,m_n$, there exist infinitely many probability distributions corresponding to the same moments and they form a convex set. A sequence of scalars $m_0,m_1,m_2,\dots,m_n$ is  a valid sequence of moments  provided that they  satisfy certain consistency constraints. For instance, the moment matrix, obtained by organising that sequence into a matrix (in a certain way), must be positive semi-definite. This gives reason for the constraint $\rho\geq0$ for density matrices in QT. In general, $\rho$ is a generalised moment matrix, that is a moment matrix computed with respect to a `charge`. \citep{Benavoli2019bb}.} matrices: $\rho:=L(zz^\dagger)$. 

The derivation allows us to formulate quantum weirdness (that is the disagreement between QM and classical physics) as a Dutch book (sure loss).
This goes as follows. Given that QM uses a stronger notion of positivity/negativity, a set of desirable gambles can include  a gamble $f \in \mathcal{L}_R^{<}\backslash \Sigma^{<}$ and still be A-coherent.
When this happens, we have entanglement. In this case, the experimental results  appear illogical to us  (incompatible with our common understanding), because they are simply incoherent under  classical desirability.

\section{The symmetrisation postulate}\label{sec:symprinc}
 In this section, we formulate the symmetrisation postulate using QM theory \citep[XIV.C-1, p. 1434]{cohen2020quantum}. In the next section, we will instead derive this postulate using exchangeable gambles.
 
Suppose we have $m$ particles, each with single-particle  state space represented by a vector space $V=\overline{\mathbb{C}}^n$ (we assume $n_j=n$, same dimension for all particles). We denote a state (a wavefunction) with $\ket{\psi}$, where $\ket{\psi} \in V$.\footnote{$\ket{\psi}$ is a ket, that is a column vector.} According to QM postulates, if the particles were distinguishable the composite space of $m$ particles would be given by $\otimes_{i=1}^m V$.
Let us denote the state of a particle with $\ket{\alpha_i}$, so that an element of $\otimes_{i=1}^m V$ is denoted as 
$\ket{\psi}=\ket{\alpha_1} \otimes \dots \otimes \ket{\alpha_m}$.

\begin{remark}
 In section \ref{sec:algorat} we considered $x_i \in V$, while in this section we use $\ket{\alpha_i} \in V$. Why? The reason is that, in Section \ref{sec:algorat}, $x_i$ represents an unknown ``classical'' variable (e.g., the direction of the spin) and we ask a subject to express her beliefs about $x_i$ in terms of acceptance of gambles. Conversely, $\ket{\alpha_i} $ is a state: a proxy quantity which is used in QM to compute the probability of the results of an experiment. QM postulates are formulated in terms of $\ket{\alpha_i} $ (usually denoted as $\ket{\psi_i}$). Indeed, under the epistemic interpretation of QM, $\ket{\alpha_i} $ corresponds to a belief state  and so it is different from  $x_i$. This difference is also evident from the fact that, for a composite system, $\ket{\psi}=\ket{\alpha_1} \otimes \dots \otimes \ket{\alpha_m} \in \otimes_{i=1}^m V$, while $[x_1,\dots,x_m]\in \times_{i=1}^m V$. To understand this difference, consider the toss of a classical coin: $\Omega=\{H,T\}$ and  $p=[p_H,p_T] \in \mathbb{R}^2$ is the vector of probabilities for Heads and Tails. Now consider the toss of three coins, the composite possibility space 
is $ \times_{i=1}^3 \Omega$, while the joint probability mass function belongs to $\otimes_{i=1}^3\mathbb{R}^2=\mathbb{R}^8$.
 \end{remark}

In this work, we are  interested in defining the state space for indistinguishable particles. 

Let $\pi$ 
denotes a permutation of the indices of the elements of the tensor product $\ket{\alpha_1} \otimes \dots \otimes \ket{\alpha_m}$. 
Since such a permutation defines the product $\ket{\alpha_{\pi(1)}} \otimes \dots \otimes \ket{\alpha_{\pi(m)}}$, by permuting the elements of the tensor products, we are basically permuting the labels of the particles.
A permutation that only swaps two variables is called a \textit{transposition}.

The \textit{sign of a permutation} $\pi$, denoted by $\sign(\pi)$,  equals
1   if  $\pi$ can be written as  a product
of an even number of transpositions, and equals -1 if $\pi$ can be written as a product of an odd number of transpositions.
Notice that the sign of $\pi$ can be calculated as follows:
$$
\sign(\pi)=\text{det} \sum_{i=1}^m e_i e^T_{\pi(i)},
$$   
where $ e_i$ is an element of the canonical basis of $\mathbb{R}^m$ (see \citep[XIV.B-2-c]{cohen2020quantum}).

Since permutations are linear operator, we can equivalently express permutation $\pi$ as a matrix operator $P_{\pi}$ acting on the tensor product:
$$
P_{\pi} (\ket{\alpha_1} \otimes \dots \otimes \ket{\alpha_m}):=\ket{\alpha_{\pi(1)}} \otimes \dots \otimes \ket{\alpha_{\pi(m)}}.
$$
The matrix $P_{\pi}$ is unitary, that is 
$P^{\dagger}_{\pi}P_{\pi}=P_{\pi}P^{\dagger}_{\pi}=I$, but not necessarily Hermitian. In what follows, by $\mathbb{P}_m$ we both denote the collection of all permutations and of all corresponding permutation operators.

We now introduce the  \textit{symmetriser} and the \textit{antisymmetriser}:
\begin{align*}
\Pi_\mathrm{Sym}&:= \frac{1}{m!} \sum_{\pi_r \in \mathbb{P}_m} P_{\pi_r}, \\
\Pi_\mathrm{Anti}&:= \frac{1}{m!} \sum_{\pi_r \in \mathbb{P}_m} \sign(\pi_r) P_{\pi_r}.
\end{align*}
which  are projectors\footnote{They are Hermitian $\Pi_\mathrm{Sym}^{\dagger}=\Pi_\mathrm{Sym},\Pi_\mathrm{Anti}^{\dagger}=\Pi_\mathrm{Anti}$ and they satisfy $\Pi_\mathrm{Sym}^2=\Pi_\mathrm{Sym},\Pi_\mathrm{Anti}^2=\Pi_\mathrm{Anti}$ and $\Pi_\mathrm{Sym}\Pi_\mathrm{Anti}=\Pi_\mathrm{Anti}\Pi_\mathrm{Sym}=0$.} \citep[XIV.B-2-c]{cohen2020quantum}. They project onto respectively:
\begin{align*}
 \mathrm{Sym}^m V&=\{\ket{\psi} \in \otimes_{i=1}^m V: P_{\pi}\ket{\psi}= \ket{\psi},~~\forall \pi \in \mathbb{P}_m\}\\
  \mathrm{Anti}^mV&=\{\ket{\psi} \in \otimes_{i=1}^m V: P_{\pi}\ket{\psi}= \sign(\pi)\ket{\psi},~~ \forall \pi \in \mathbb{P}_m\}.
\end{align*}

\begin{lemma}[\cite{cohen2020quantum}]
\label{eq:lem1}
The following equalities hold for any permutation operator $P_{\pi}  \in \mathbb{P}_m$:
 \begin{enumerate}
  \item $P_{\pi} \Pi_\mathrm{Sym} = \Pi_\mathrm{Sym} P_{\pi} =\Pi_\mathrm{Sym}$;
    \item $P_{\pi} \Pi_\mathrm{Anti} = \Pi_\mathrm{Anti} P_{\pi} =\sign(\pi)\Pi_\mathrm{Anti}$.
 \end{enumerate}
\end{lemma}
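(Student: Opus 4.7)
The plan is to exploit two structural facts: first, that $\mathbb{P}_m$ is a group under composition (so $\pi \pi_r$ runs over $\mathbb{P}_m$ exactly once as $\pi_r$ does, for any fixed $\pi$), and second, that $\sign \colon \mathbb{P}_m \to \{-1,+1\}$ is a group homomorphism. Also one needs the tautology that the operator assignment $\pi \mapsto P_\pi$ is itself a representation, i.e.\ $P_\pi P_{\pi_r} = P_{\pi\pi_r}$, which follows immediately from the definition $P_\pi (\ket{\alpha_1}\otimes\cdots\otimes\ket{\alpha_m}) = \ket{\alpha_{\pi(1)}}\otimes\cdots\otimes\ket{\alpha_{\pi(m)}}$ applied twice on product kets, extended by linearity.

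For item (1), I would write
\[
 P_\pi \Pi_\mathrm{Sym} = \frac{1}{m!}\sum_{\pi_r \in \mathbb{P}_m} P_\pi P_{\pi_r} = \frac{1}{m!}\sum_{\pi_r \in \mathbb{P}_m} P_{\pi \pi_r},
\]
and then substitute $\sigma = \pi\pi_r$; since left-multiplication by $\pi$ is a bijection of $\mathbb{P}_m$, the sum is exactly $\sum_{\sigma \in \mathbb{P}_m} P_\sigma = m!\,\Pi_\mathrm{Sym}$, yielding $P_\pi \Pi_\mathrm{Sym} = \Pi_\mathrm{Sym}$. The identity $\Pi_\mathrm{Sym} P_\pi = \Pi_\mathrm{Sym}$ follows by the same argument with right-multiplication (or by taking adjoints and using $\Pi_\mathrm{Sym}^\dagger = \Pi_\mathrm{Sym}$ together with $P_\pi^\dagger = P_{\pi^{-1}}$, noting that the previous conclusion holds for every permutation, in particular for $\pi^{-1}$).

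For item (2), the computation is essentially the same except that the sign travels through. Setting again $\sigma = \pi \pi_r$, I have $\pi_r = \pi^{-1}\sigma$, hence $\sign(\pi_r) = \sign(\pi^{-1})\sign(\sigma) = \sign(\pi)\sign(\sigma)$ because $\sign$ is a homomorphism and $\sign(\pi^{-1}) = \sign(\pi)$. Therefore
\[
 P_\pi \Pi_\mathrm{Anti} = \frac{1}{m!}\sum_{\pi_r} \sign(\pi_r) P_{\pi\pi_r} = \frac{\sign(\pi)}{m!}\sum_\sigma \sign(\sigma) P_\sigma = \sign(\pi)\Pi_\mathrm{Anti},
\]
and the right-multiplication version is handled symmetrically.

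There is no real obstacle here; the only thing to be a bit careful about is the direction of composition when switching between the two sides, but the group-bijection trick absorbs this automatically. In essence both statements are the standard fact that applying a group element to an (anti)symmetrising projector of a finite group representation reproduces the projector up to the one-dimensional character evaluated at that element.
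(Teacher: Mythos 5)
Your proof is correct and follows essentially the same route as the paper's: reindex the sum over $\mathbb{P}_m$ via the bijection induced by multiplication with $\pi$, tracking the sign through the homomorphism property in the antisymmetric case. The only cosmetic caveat is that with the paper's convention $P_{\pi}(\ket{\alpha_1}\otimes\cdots\otimes\ket{\alpha_m})=\ket{\alpha_{\pi(1)}}\otimes\cdots\otimes\ket{\alpha_{\pi(m)}}$ the map $\pi\mapsto P_\pi$ composes in the reverse order ($P_\pi P_{\pi_r}=P_{\pi_r\pi}$ rather than $P_{\pi\pi_r}$), but since right-multiplication is equally a bijection of $\mathbb{P}_m$ and $\sign$ is still multiplicative, your argument goes through unchanged.
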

\begin{proof}
 Given two permutations $P_{\pi_i} \neq P_{\pi_j}$, we have that $P_{\pi}P_{\pi_i} \neq P_{\pi} P_{\pi_j}$.
 Hence we have that
 \begin{align*}
 P_{\pi} \Pi_\mathrm{Sym}=\frac{1}{m!} \sum_{\pi_r \in \mathbb{P}_m} P_{\pi}P_{\pi_r}=\frac{1}{m!} \sum_{\pi'_r \in \mathbb{P}_m} P_{\pi'_r}. 
 \end{align*}
Analogously, since  $\sign(\pi)\sign(\pi)=1$
 \begin{align*}
 P_{\pi} \Pi_\mathrm{Anti}&=\frac{1}{m!} \sum_{\pi_r \in \mathbb{P}_m} \sign(\pi_r) P_{\pi}P_{\pi_r}\\
 &=\frac{\sign(\pi)}{m!} \sum_{\pi_r \in \mathbb{P}_m} \sign(\pi_r) \sign(\pi)P_{\pi}P_{\pi_r}\\
 &=\frac{\sign(\pi)}{m!} \sum_{\pi'_r \in \mathbb{P}_m} \sign(\pi'_r) P_{\pi'_r}. 
 \end{align*}
\end{proof}

The \emph{symmetrisation postulate} states the following:
\begin{quote}
 When a system includes several identical particles, only certain states of its state space can describe its physical states.  Physical states are, depending on the nature of the identical particles, either completely symmetric or completely antisymmetric with respect to permutation of these particles. Those particles for which the physical states are symmetric are called bosons, and those for which they are antisymmetric, fermions.  \citep[XIV.C-1, p. 1434]{cohen2020quantum}
\end{quote}
The  postulate thus limits the state space (possibility space) for a system of identical particles. Contrary to the case of particles of different natures, this space is no longer the tensor product $\otimes_{i=1}^m V$ of the individual state spaces of the particles constituting the system, but rather a subspace, namely $\mathrm{Sym}^m V$ or $\mathrm{Anti}^m V$, depending on whether the particles are bosons or fermions. Only states belonging either to $\mathrm{Sym}^m V$ or to $\mathrm{Anti}^m V$ are physically possible.  This is the reasons they are called \textit{physical states}.

Given $k$ physical states $\ket{\psi_i}$ (belonging to either $\mathrm{Sym}^m V$ or $\mathrm{Anti}^m V$), we can then define the density matrix as usual:
$$
\rho=\sum_{i=1}^{k} p_i  \ket{\psi_i}\bra{\psi_i}, 
$$
where $p_i$ are probabilities, $p_i\geq0$ and $\sum_{i=1}^kp_i=1$.
It can then be verified that, in the symmetric case, given that $\ket{\psi_i}=\Pi_\mathrm{Sym}\ket{\psi_i}$, we have that $\rho=\Pi_\mathrm{Sym}\rho\Pi_\mathrm{Sym}$.
Similarly, in the antisymmetric case,
$\rho=\Pi_\mathrm{Anti}\rho\Pi_\mathrm{Anti}$.

\begin{example}
\label{eq:1}
 Consider $m=2$ particles with $\ket{\alpha_1},\ket{\alpha_2} \in \overline{\mathbb{C}}^2$. In this case there are  only two possible
 permutations $\pi_a$ (identity) and $\pi_b$ (swap)
 with $\sign(\pi_b)=-1$: 
 \begin{center}
\begin{tikzpicture}[
  /pgf/braid/.cd,
  style strands={1}{red},
  style strands={2}{blue},
  number of strands=2
  ]
\braid (identity) at (0,0) 1 ;
\node at ([yshift=1cm]identity) {$1~~~~~~~~~2$};
\node at ([yshift=-1cm]identity) {$1~~~~~~~~~2$};

\braid (12) at ([xshift=1.3cm]identity-2-s) s_1;
\node at ([yshift=1cm]12) {$1~~~~~~~~~2$};
\node at ([yshift=-1cm]12) {$2~~~~~~~~~1$};
\end{tikzpicture}
\end{center}
The permutation matrices are $P_{\pi_a}=I$ and:
\begin{align}
\label{eq:Permexb}
P_{\pi_b}=\begin{bmatrix}
 1 &  0 & 0 & 0\\
 0 & 0 & 1 &0\\
0 & 1 & 0 &0\\
0 & 0 & 0 & 1
\end{bmatrix}.
\end{align}
The latter acts on $\ket{\alpha_1} \otimes \ket{\alpha_2}$ as follows
$$
P_{\pi_b}(\ket{\alpha_1} \otimes \ket{\alpha_2})=P_{\pi_b}\begin{bmatrix}
 \alpha_{11}\alpha_{21}\\
 \alpha_{11}\alpha_{22}\\
 \alpha_{12}\alpha_{21}\\
 \alpha_{12}\alpha_{22}
\end{bmatrix}=\begin{bmatrix}
 \alpha_{11}\alpha_{21}\\
  \alpha_{12}\alpha_{21}\\
 \alpha_{11}\alpha_{22}\\
 \alpha_{12}\alpha_{22}
\end{bmatrix}=\ket{\alpha_2} \otimes \ket{\alpha_1}.
$$
The projectors are:
\begin{equation}
\label{eq:Proj}
\Pi_\mathrm{Sym}=\frac{I+P_{\pi_b}}{2},~~\Pi_\mathrm{Anti}=\frac{I+\text{sign}(\pi_b)P_{\pi_b}}{2}=\frac{I-P_{\pi_b}}{2},
\end{equation}
which act on $\ket{\alpha_1} \otimes \ket{\alpha_2}$ as follows\footnote{The right hand side term in \eqref{eq:Psym1} or \eqref{eq:Panti1} is a complex vector, but its norm can be different from one. In this latter case, it needs to be normalised.}

\begin{align}
\label{eq:Psym1}
\Pi_\mathrm{Sym}(\ket{\alpha_1} \otimes \ket{\alpha_2})&=\begin{bmatrix}
 \alpha_{11}\alpha_{21}\\
 \tfrac{\alpha_{11}\alpha_{22}+ \alpha_{12}\alpha_{21}}{2}\\
 \tfrac{\alpha_{11}\alpha_{22}+ \alpha_{12}\alpha_{21}}{2}\\
 \alpha_{12}\alpha_{22}
\end{bmatrix},\\
\label{eq:Panti1}
\Pi_\mathrm{Anti}(\ket{\alpha_1} \otimes \ket{\alpha_2})&=\begin{bmatrix}
 0\\
 \tfrac{\alpha_{11}\alpha_{22}- \alpha_{12}\alpha_{21}}{2}\\
 \tfrac{\alpha_{12}\alpha_{21}-\alpha_{11}\alpha_{22}}{2}\\
 0\\
\end{bmatrix}
\end{align}
From last equality, it follows that, in case $\alpha_1=\alpha_2$, $\Pi_\mathrm{Anti}(\ket{\alpha_1} \otimes \ket{\alpha_2})=0$.
This is called \emph{Pauli exclusion principle}: two Fermions cannot have identical state.
\end{example}

\section{Exchangeable gambles}\label{sec: exchange}
In the previous section, we discussed the symmetrisation postulate. In this section, we formulate it in terms of A-coherence and exchangeability. In doing so, we extend some of the definitions and results originally presented in \citep{de2012exchangeability} to the quantum setting introduced in Section \ref{sec:algorat}.   

As discussed in Section \ref{sec:algorat}, we consider gambles on $x_i \in V=\overline{\mathbb{C}}^n$. 
Given $m$ particles, the possibility space 
is $\times_{i=1}^m V$. Therefore, $\pi$ 
denotes a permutation of the indices of the vector $(x_1,\dots,x_m)$, i.e.,
$$
\pi(x_1,\dots,x_m)=(x_{\pi(1)},\dots,x_{\pi(m)}).
$$
A generic gamble is denoted as:
\begin{align*}
 g(z,z):=z^\dagger G z,
\end{align*}
with
 $z:=\otimes_{j=1}^m x_j$. Let $\pi_r,\pi_l$ be two permutations, 
  we define
\begin{align*}
\pi_lg(z,z)\pi_r&:= \frac{1}{2}\left(g(\pi_l z,\pi_r z)+g(\pi_r z,\pi_l z)\right)\\
&=\frac{1}{2}\left(z^\dagger P^{\dagger}_{\pi_l}G P_{\pi_r}z+z^\dagger P^{\dagger}_{\pi_r}G P_{\pi_l}z\right).
\end{align*}
Note that (i) $\pi_lg\pi_r=\pi_rg\pi_l$, and
(ii) $\pi_lg\pi_r$ is a gamble (it returns real values).\footnote{This holds because $P^{\dagger}_{\pi_l}G P_{\pi_r}+ P^{\dagger}_{\pi_r}G P_{\pi_l}$ is Hermitian.}
\begin{remark}
  This definition of permuted gamble is different from the one used in \citep{de2012exchangeability} (the permutation of $g(\omega)$ is defined as $\pi \circ g=g(\pi \omega)$). In QM, gambles are quadratic forms of complex variables and, therefore, we can define more general symmetries by exploiting the fact that $z$ and its complex conjugate $z^{\dagger}$ can be treated as two ``different'' variables.
\end{remark}

\begin{example}
 Consider $m=2$ particles with $x_1,x_2 \in \overline{\mathbb{C}}^2$. We have already seen that there are  only two possible
 permutations $\pi_a$ (identity) and $\pi_b$ (swap). Therefore, we have $\pi_ag\pi_a=g$ and
\begin{align*}
\nonumber
\pi_ag\pi_b&= \frac{1}{2}\left(
(x_1\otimes x_2)G(x_2\otimes x_1)+(x_2\otimes x_1)G(x_1\otimes x_2)\right),\\
\nonumber
\pi_bg\pi_b&= (x_2\otimes x_1)G(x_2\otimes x_1).
\end{align*}
\end{example}

For $\pi_l,\pi_r \in \mathbb{P}_m$,  we write
  \[
\delta^\star_{l,r} := 
\begin{cases}
\sign(\pi_l)\sign(\pi_r)  & \text{ when }\star = {Anti},\\
1 & \text{ when }\star = {Sym}.
\end{cases}
\]

Given this definition, in the remaining of this section, all definitions, results and corresponding proofs will be parameterised by $\star \in \{{Anti}, {Sym}\}$ and $\delta^\star_{l,r}$. They therefore apply, uniformly, to both the symmetric and the antisymmetric  cases. 
 
 We now provide the definition of A-coherent $\star$-exchangeable set of desirable gambles.
 \begin{definition}
\label{def:1}
Consider the set
\begin{align*}
\mathcal{I}_{\star}&:= \{g- \delta^\star_{l,r}\pi_lg\pi_r \mid g \in \gambles_R, ~\pi_l,\pi_r \in \mathbb{P}_m\}.
\end{align*}
We say that an A-coherent set of desirable gambles $\bdomain$ is \emph{$\star$-exchangeable} if  $\mathcal{I}_{\star} \subseteq \bdomain$.
 \end{definition}
 
 Given Definition \ref{def:1}, we can prove the following result.
 
\begin{proposition}
\label{prop:1}
Let $\bdomain$ be an A-coherent set of desirable gambles.
 If $\bdomain$ is $\star$-exchangeable, then it is also $\star$-permutable,  that is $\delta^\star_{l,r} \pi_lg\pi_r$   are in $\bdomain$ for all $g \in \bdomain$ and all $\pi_l,\pi_r \in \mathbb{P}_m$.  
\end{proposition}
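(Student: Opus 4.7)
The plan is to exploit the fact that desirable sets are closed under conic combinations (since $\bdomain = \posi(\bnonnegative \cup \mathcal{G})$) together with a symmetry observation about $\mathcal{I}_\star$. The target $\delta^\star_{l,r}\pi_l g\pi_r$ cannot be obtained by any kind of subtraction from $\bdomain$; it must be exhibited as a sum of two elements already known to lie in $\bdomain$.

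The key observation I would establish first is that $\mathcal{I}_\star$ is closed under negation. Since the defining expression $g - \delta^\star_{l,r}\pi_l g\pi_r$ is linear in $g$, replacing the free parameter $g \in \gambles_R$ by $-g$ yields $-g - \delta^\star_{l,r}\pi_l(-g)\pi_r = -(g - \delta^\star_{l,r}\pi_l g\pi_r)$. Hence for every $g \in \gambles_R$ and every pair $\pi_l,\pi_r \in \mathbb{P}_m$, both $g - \delta^\star_{l,r}\pi_l g\pi_r$ and its negative $\delta^\star_{l,r}\pi_l g\pi_r - g$ belong to $\mathcal{I}_\star \subseteq \bdomain$.

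Now fix an arbitrary $g \in \bdomain$ and arbitrary permutations $\pi_l, \pi_r$. Note that $g$ is necessarily a permitted gamble, i.e. $g \in \gambles_R$, so the symmetry observation above applies and gives $\delta^\star_{l,r}\pi_l g\pi_r - g \in \bdomain$. Adding this element to $g \in \bdomain$ and using that $\bdomain$, being a conic hull, is closed under addition, yields
\begin{equation*}
g + \bigl(\delta^\star_{l,r}\pi_l g\pi_r - g\bigr) \;=\; \delta^\star_{l,r}\pi_l g\pi_r \;\in\; \bdomain,
\end{equation*}
which is exactly the $\star$-permutability claim.

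The potential pitfall, and the only non-routine step, is recognising that even though exchangeability is stated as an inclusion of asymmetric differences $g - \delta^\star_{l,r}\pi_l g\pi_r$, the set $\mathcal{I}_\star$ is in fact symmetric because $g$ ranges over all of $\gambles_R$; once this is noticed, the rest is a one-line conic-combination argument, and no appeal to A-coherence beyond the convex-cone structure of $\bdomain$ is needed.
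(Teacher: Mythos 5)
Your proof is correct and follows essentially the same route as the paper's: both instantiate $\mathcal{I}_\star$ at $-g$ (using linearity of $\pi_l(\cdot)\pi_r$) to obtain $\delta^\star_{l,r}\pi_l g\pi_r - g \in \bdomain$, and then conclude by additivity of the cone $\bdomain$. The only cosmetic difference is that you package this as ``$\mathcal{I}_\star$ is closed under negation,'' whereas the paper writes out the single instance it needs.
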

\begin{proof} The proof is similar as the one for
 \citep[Prop.9]{de2012exchangeability}.  For $g \in \bdomain$ and $\pi_l,\pi_r \in \mathbb{P}_m$, we have $-g- \delta^\star_{l,r}\pi_l(-g)\pi_r \in  \mathcal{I}_{\star} \subseteq \bdomain$. Given that $-g=z^\dagger  (-G)  z$, then $-g-\delta^\star_{l,r} \pi_l (-g)\pi_r=\delta^\star_{l,r} \pi_lg\pi_r-g$.
Since
$\delta^\star_{l,r} \pi_lg\pi_r=\delta^\star_{l,r} \pi_lg\pi_r-g+g$ and $g,\delta^\star_{l,r} \pi_lg\pi_r-g \in \bdomain$, we conclude by additivity that $\delta^\star_{l,r}  \pi_lg\pi_r \in \bdomain$. 
\end{proof}

As in \citep{de2012exchangeability}, but taking into account that we are working with quadratic forms, we define the linear operators
\begin{align*}
 \text{ex}_{\star}^m(g)&:=z^{\dagger} \Pi_{\star}^{\dagger} G \Pi_{\star} z.
 \end{align*}

We verify some of their properties; in particular  that  they can be used to equivalently characterise symmetric and antisymmetric exchangeability (Corollary \ref{cor:exc}). 

The first result follows immediately from the fact that the symmetrisers and the antisymmetriser are projectors.
\begin{lemma}\label{lem:projo}
Let $g$ be a gamble, then 
$\text{ex}_{\star}^m(\text{ex}_{\star}^m(g)) = \text{ex}_{\star}^m(g)$.
\end{lemma}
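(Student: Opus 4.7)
The plan is to unfold the definition of $\text{ex}_{\star}^m$ twice and then use the two defining properties of a projector recalled in the footnote before Lemma~\ref{eq:lem1}, namely that $\Pi_\star$ is Hermitian ($\Pi_\star^\dagger = \Pi_\star$) and idempotent ($\Pi_\star^2 = \Pi_\star$) for both $\star \in \{\mathrm{Sym},\mathrm{Anti}\}$.

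Concretely, I first write $\text{ex}_{\star}^m(g) = z^\dagger G' z$ where $G' := \Pi_\star^\dagger G \Pi_\star$. This is again a gamble of the form \eqref{eq:gamble_many}, so applying $\text{ex}_{\star}^m$ to it yields
\begin{equation*}
\text{ex}_{\star}^m(\text{ex}_{\star}^m(g)) = z^\dagger \Pi_\star^\dagger G' \Pi_\star z = z^\dagger \Pi_\star^\dagger \Pi_\star^\dagger G \Pi_\star \Pi_\star z.
\end{equation*}
Using that $\Pi_\star$ is Hermitian and idempotent, $\Pi_\star^\dagger \Pi_\star^\dagger = (\Pi_\star^2)^\dagger = \Pi_\star^\dagger$ and similarly $\Pi_\star \Pi_\star = \Pi_\star$, so the right hand side collapses to $z^\dagger \Pi_\star^\dagger G \Pi_\star z = \text{ex}_{\star}^m(g)$.

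There is no real obstacle here; the only subtle points are to make sure the associated matrix of $\text{ex}_{\star}^m(g)$ is well-defined as a Hermitian operator (which it is, since $\Pi_\star^\dagger G \Pi_\star$ is Hermitian whenever $G$ is), so that the second application of $\text{ex}_{\star}^m$ is legitimate. Everything else is one line of matrix algebra, uniform in $\star$, since the projector properties hold identically for $\Pi_\mathrm{Sym}$ and $\Pi_\mathrm{Anti}$.
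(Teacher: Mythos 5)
Your proof is correct and is precisely the argument the paper intends: the text states the lemma ``follows immediately from the fact that the symmetrisers and the antisymmetriser are projectors,'' and your computation with $\Pi_\star^\dagger\Pi_\star^\dagger = \Pi_\star^\dagger$ and $\Pi_\star\Pi_\star = \Pi_\star$ is exactly that observation spelled out. The remark that $\Pi_\star^\dagger G \Pi_\star$ is again Hermitian, so the second application of $\text{ex}_{\star}^m$ is well-defined, is a worthwhile bit of care that the paper leaves implicit.
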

 The idea behind this linear transformations $\text{ex}_{\star}^m(g)$ is that they render a gamble $g$ insensitive to permutation by replacing it with the uniform average $\text{ex}_{\star}^m(g)$ of all its permutations $\pi_l g\pi_r$, as shown hereafter.
 \begin{proposition}\label{prop:sum}
 Let $g$ be a gamble, then 
$$
\text{ex}_{\star}^m(g) =   \frac{1}{m!m!} \sum_{\pi_r,\pi_l  \in \mathbb{P}_m} \delta^\star_{l,r} \pi_lg\pi_r.\,
  $$
 \end{proposition}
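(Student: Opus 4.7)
The plan is to expand $\text{ex}_\star^m(g) = z^\dagger \Pi_\star^\dagger G \Pi_\star z$ by substituting the defining sums for $\Pi_\star$. Since the footnote after the definition of the (anti)symmetriser states that both projectors are Hermitian, we have $\Pi_\star^\dagger = \Pi_\star$, and so
\begin{align*}
\text{ex}_\star^m(g)
&= z^\dagger \Bigl(\tfrac{1}{m!}\sum_{\pi_l} \epsilon^\star_l P_{\pi_l}\Bigr)^{\!\dagger} G \Bigl(\tfrac{1}{m!}\sum_{\pi_r} \epsilon^\star_r P_{\pi_r}\Bigr) z\\
&= \tfrac{1}{m!\,m!}\sum_{\pi_l,\pi_r} \epsilon^\star_l\epsilon^\star_r\, z^\dagger P_{\pi_l}^\dagger G P_{\pi_r} z,
\end{align*}
where $\epsilon^\star_\pi$ is $1$ in the Sym case and $\sign(\pi)$ in the Anti case. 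In both cases the product $\epsilon^\star_l\epsilon^\star_r$ coincides with $\delta^\star_{l,r}$ (since $1\cdot 1 = 1$ and $\sign(\pi_l)\sign(\pi_r)$ is exactly what $\delta^\star_{l,r}$ is defined to be in the Anti case).

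Next I would symmetrise the double sum in the pair $(\pi_l,\pi_r)$. The summation index set $\mathbb{P}_m \times \mathbb{P}_m$ is invariant under the swap $(\pi_l,\pi_r)\leftrightarrow (\pi_r,\pi_l)$, and the coefficient $\delta^\star_{l,r}$ is also invariant under this swap. Therefore
\begin{align*}
\text{ex}_\star^m(g)
&= \tfrac{1}{m!\,m!}\sum_{\pi_l,\pi_r} \delta^\star_{l,r}\cdot\tfrac{1}{2}\bigl(z^\dagger P_{\pi_l}^\dagger G P_{\pi_r} z + z^\dagger P_{\pi_r}^\dagger G P_{\pi_l} z\bigr)\\
&= \tfrac{1}{m!\,m!}\sum_{\pi_l,\pi_r} \delta^\star_{l,r}\,\pi_l g \pi_r,
\end{align*}
where the last equality is just the definition of $\pi_l g \pi_r$ stated earlier in the section. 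This gives the claimed identity uniformly for $\star \in \{\mathrm{Sym},\mathrm{Anti}\}$.

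I do not anticipate any real obstacle: the proof essentially consists of unpacking the definition of $\text{ex}_\star^m$, using $\Pi_\star^\dagger = \Pi_\star$, and exploiting the symmetry of the summation in $(\pi_l,\pi_r)$ to account for the factor $\tfrac12$ in the definition of $\pi_l g \pi_r$. The only point worth being careful about is the bookkeeping of signs in the Anti case: one must verify that the product of the two signs coming out of $\Pi_{\mathrm{Anti}}^\dagger$ and $\Pi_{\mathrm{Anti}}$ reproduces exactly $\delta^\star_{l,r}$, which it does by definition.
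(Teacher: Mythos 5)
Your proof is correct and follows essentially the same route as the paper's: expand $\Pi_\star^\dagger G\Pi_\star$ into the double sum over permutations (with the sign product giving $\delta^\star_{l,r}$), then use the invariance of the double sum under swapping $(\pi_l,\pi_r)$ to identify $\frac{1}{m!m!}\sum_{\pi_l,\pi_r}\delta^\star_{l,r}\,z^\dagger P_{\pi_l}^\dagger G P_{\pi_r}z$ with $\frac{1}{m!m!}\sum_{\pi_l,\pi_r}\delta^\star_{l,r}\,\pi_l g\pi_r$. The only cosmetic remark is that you do not actually need the Hermiticity of $\Pi_\star$, only the term-by-term expansion of $\Pi_\star^\dagger$ (the signs being real), but this does not affect the argument.
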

 \begin{proof}
 It is immediate to verify that $\text{ex}_{\star}^m(g) =   \frac{1}{m!m!} \sum_{\pi_r,\pi_l  \in \mathbb{P}_m} \delta^\star_{l,r} g(\pi_l z,\pi_r z)$. To conclude, note that: 
  $$
  \begin{aligned}
&\sum_{\pi_l,\pi_r \in \mathbb{P}_m} \delta^\star_{l,r} \pi_lg\pi_r = \\
& =\sum_{\pi_l,\pi_r \in \mathbb{P}_m} \frac{\delta_\star}{2}\left(g(\pi_l z,\pi_r z)+g(\pi_r z,\pi_l z)\right)\\
 &=\frac{1}{2}\sum_{\pi_l,\pi_r \in \mathbb{P}_m} \delta^\star_{l,r} g(\pi_l z,\pi_r z)+\frac{1}{2}\sum_{\pi_l,\pi_r \in \mathbb{P}_m} \delta^\star_{r,l} g(\pi_r z,\pi_l z)\\
 &=\sum_{\pi_l,\pi_r \in \mathbb{P}_m} \delta^\star_{l,r} g(\pi_l z,\pi_r z).
  \end{aligned}
$$ 
 \end{proof}
 Clearly, the linear transformations $\text{ex}_{\star}^m$ assume the same value for all gambles that can be related to each other
through some permutation.
 \begin{proposition}\label{prop:star}
Let $g$ be a gamble, and $\pi_l,\pi_r \in \mathbb{P}_m$. Then 
$$
  \text{ex}_{\star}^m(\delta^\star_{l,r}\pi_lg\pi_r)=\text{ex}_{\star}^m(g).
  $$
 \end{proposition}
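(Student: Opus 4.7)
The plan is to work directly with the matrix representation of the gambles. Since $\text{ex}_{\star}^m(g) = z^{\dagger}\Pi_{\star}^{\dagger} G \Pi_{\star} z$ by definition, it suffices to show that the Hermitian matrix $H$ associated with the gamble $\delta^{\star}_{l,r}\pi_l g \pi_r$ satisfies $\Pi_{\star}^{\dagger} H \Pi_{\star} = \Pi_{\star}^{\dagger} G \Pi_{\star}$.

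First, I would read off this matrix from the defining formula for a permuted gamble: the gamble $\delta^{\star}_{l,r}\pi_l g \pi_r$ is represented by
$$H = \tfrac{1}{2}\delta^{\star}_{l,r}\bigl(P^{\dagger}_{\pi_l} G P_{\pi_r} + P^{\dagger}_{\pi_r} G P_{\pi_l}\bigr),$$
so that $\text{ex}_{\star}^m(\delta^{\star}_{l,r}\pi_l g \pi_r) = z^{\dagger}\Pi_{\star}^{\dagger} H \Pi_{\star} z$.

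Second, I would invoke Lemma~\ref{eq:lem1} to absorb the permutation operators into the projectors. Writing $\epsilon^{\star}_{\pi} := 1$ when $\star = \mathrm{Sym}$ and $\epsilon^{\star}_{\pi} := \sign(\pi)$ when $\star = \mathrm{Anti}$, the lemma reads $P_{\pi}\Pi_{\star} = \epsilon^{\star}_{\pi}\Pi_{\star}$; taking adjoints and using that $\epsilon^{\star}_{\pi}$ is real yields $\Pi_{\star}^{\dagger} P^{\dagger}_{\pi} = \epsilon^{\star}_{\pi}\Pi_{\star}^{\dagger}$. Sandwiching each of the two summands in $H$ between $\Pi_{\star}^{\dagger}$ and $\Pi_{\star}$ therefore produces the common factor $\epsilon^{\star}_{\pi_l}\epsilon^{\star}_{\pi_r}$, giving
$$\Pi_{\star}^{\dagger} H \Pi_{\star} = \delta^{\star}_{l,r}\,\epsilon^{\star}_{\pi_l}\epsilon^{\star}_{\pi_r}\,\Pi_{\star}^{\dagger} G \Pi_{\star}.$$
The final step is the sign bookkeeping: $\epsilon^{\star}_{\pi_l}\epsilon^{\star}_{\pi_r} = \delta^{\star}_{l,r}$ in both cases (trivially for Sym, and as $\sign(\pi_l)\sign(\pi_r) = \delta^{\star}_{l,r}$ for Anti), so the scalar reduces to $(\delta^{\star}_{l,r})^2 = 1$ and the claim follows.

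The only point requiring care is keeping the two independent sign contributions straight — the prefactor $\delta^{\star}_{l,r}$ built into the definition of $\delta^{\star}_{l,r}\pi_l g \pi_r$, and the factors $\epsilon^{\star}_{\pi_l},\epsilon^{\star}_{\pi_r}$ emerging from Lemma~\ref{eq:lem1} — and verifying that they multiply to $1$ uniformly in $\star$. An alternative route via the sum formula of Proposition~\ref{prop:sum} (reindexing permutations) is possible but not cleaner, as it ultimately relies on the same sign identity.
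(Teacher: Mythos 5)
Your proof is correct and follows essentially the same route as the paper's: both read off the Hermitian matrix of the permuted gamble, rewrite $\Pi_{\star}^{\dagger}P^{\dagger}_{\pi_l}G P_{\pi_r}\Pi_{\star}$ as $(P_{\pi_l}\Pi_{\star})^{\dagger}G(P_{\pi_r}\Pi_{\star})$, and invoke Lemma~\ref{eq:lem1} so that the emerging signs combine with the prefactor to give $(\delta^{\star}_{l,r})^2=1$. Your explicit bookkeeping with $\epsilon^{\star}_{\pi}$ is just a tidier presentation of the same sign cancellation the paper performs.
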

 \begin{proof} By exploiting linearity 
 \begin{align*}
 &\text{ex}_{\star}^m(\delta^\star_{l,r} \pi_lg\pi_r)= \delta^\star_{l,r}\text{ex}_{\star}^m( \pi_lg\pi_r) =\\
&= \delta^\star_{l,r} \big( z^{\dagger}\Pi_{\star}^{\dagger}\left(\frac{1}{2}\left(P^{\dagger}_{\pi_l}G P_{\pi_r}+ P^{\dagger}_{\pi_r}G P_{\pi_l}\right)\right)\Pi_{\star}z \big) \\
&= 
\frac{\delta^\star_{l,r}}{2}  z^{\dagger}\Pi_{\star}^{\dagger}P^{\dagger}_{\pi_l}G P_{\pi_r}\Pi_{\star}z + \frac{\delta^\star_{l,r}}{2} z^{\dagger}\Pi_{\star}^{\dagger}P^{\dagger}_{\pi_r}G P_{\pi_l}\Pi_{\star}z \\
&= 
\frac{\delta^\star_{l,r}}{2}  z^{\dagger}(P_{\pi_l}\Pi_{\star})^{\dagger}G (P_{\pi_r}\Pi_{\star})z + \frac{\delta^\star_{l,r}}{2} z^{\dagger}(P_{\pi_r}\Pi_{\star})^{\dagger}G (P_{\pi_l}\Pi_{\star})z
\end{align*}
 By Lemma \ref{eq:lem1} and the fact that $\delta^\star_{l,r}\delta^\star_{l,r}=1$, we finally obtain
 \begin{align*}
& \frac{\delta^\star_{l,r}}{2}  z^{\dagger}(P_{\pi_l}\Pi_{\star})^{\dagger}G (P_{\pi_r}\Pi_{\star})z + \frac{\delta^\star_{l,r}}{2} z^{\dagger}(P_{\pi_r}\Pi_{\star})^{\dagger}G (P_{\pi_l}\Pi_{\star})z= \\
& =\frac{\delta^\star_{l,r}\delta^\star_{l,r}}{2}  z^{\dagger}\Pi_{\star}^{\dagger}G\Pi_{\star}z + \frac{\delta^\star_{l,r}\delta^\star_{l,r}}{2} z^{\dagger}\Pi_{\star}^{\dagger}G \Pi_{\star}z\\
& = \text{ex}_{\star}^m(g).
 \end{align*}
 \end{proof}

Similarly to what was done by  \cite{de2012exchangeability}, we can prove the following.

\begin{corollary}
\label{cor:exc}
 Let $\bdomain$ be an A-coherent set of desirable gambles. Given 
\begin{align*}
\mathcal{V}_{\star}&:= \{g-  \text{ex}_{\star}^m(g) \mid g \in \gambles_R\}
\end{align*}
the following claims are equivalent, 
 \begin{description}
  \item[(1)]  $\bdomain$ is $\star$-exchangeable;
   \item[(2)]   
   $\mathcal{V}_{\star} \subseteq \bdomain$.
   \end{description}
 \end{corollary}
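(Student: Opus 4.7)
The plan is to prove each implication separately, using Proposition \ref{prop:sum} for (1) $\Rightarrow$ (2) and Proposition \ref{prop:star} together with the fact that $\gambles_R$ is closed under $g \mapsto -g$ for (2) $\Rightarrow$ (1).

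For (1) $\Rightarrow$ (2), the idea is that $g - \text{ex}_\star^m(g)$ is literally a convex average of elements of $\mathcal{I}_\star$. Concretely, by Proposition \ref{prop:sum},
\[
g - \text{ex}_\star^m(g) = \frac{1}{m!\,m!}\sum_{\pi_l,\pi_r \in \mathbb{P}_m}\bigl(g - \delta^\star_{l,r}\pi_l g\pi_r\bigr),
\]
and each summand lies in $\mathcal{I}_\star \subseteq \bdomain$. Since $\bdomain$ is closed under positive linear combinations (it is of the form $\posi(\bnonnegative \cup \mathcal{G})$), the average is again in $\bdomain$, which shows $\mathcal{V}_\star \subseteq \bdomain$.

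For (2) $\Rightarrow$ (1), I would exploit the fact that $\gambles_R$ is a vector space, so $-g \in \gambles_R$ whenever $g \in \gambles_R$, and by linearity of $\text{ex}_\star^m$ we have $-g - \text{ex}_\star^m(-g) = -(g - \text{ex}_\star^m(g))$. Hence both $g - \text{ex}_\star^m(g)$ and $\text{ex}_\star^m(g) - g$ lie in $\mathcal{V}_\star \subseteq \bdomain$. Applying the same reasoning to $\delta^\star_{l,r}\pi_l g\pi_r$ in place of $g$ and using Proposition \ref{prop:star} to rewrite $\text{ex}_\star^m(\delta^\star_{l,r}\pi_l g\pi_r) = \text{ex}_\star^m(g)$, I obtain that $\text{ex}_\star^m(g) - \delta^\star_{l,r}\pi_l g\pi_r \in \bdomain$ as well. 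Adding the two desirable gambles
\[
(g - \text{ex}_\star^m(g)) + (\text{ex}_\star^m(g) - \delta^\star_{l,r}\pi_l g\pi_r) = g - \delta^\star_{l,r}\pi_l g\pi_r
\]
and using additivity of $\bdomain$, this element of $\mathcal{I}_\star$ belongs to $\bdomain$, proving $\star$-exchangeability.

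The only delicate point I anticipate is the direction (2) $\Rightarrow$ (1): naively one only gets $\delta^\star_{l,r}\pi_l g\pi_r - \text{ex}_\star^m(g) \in \bdomain$, which is not yet what is required. The key trick is that $\mathcal{V}_\star$ already contains \emph{both} $g - \text{ex}_\star^m(g)$ and its negation (because $-g \in \gambles_R$), so that $\text{ex}_\star^m(g)$ can be eliminated by a sign-flipped pairing; this is what allows the two pieces to telescope into $g - \delta^\star_{l,r}\pi_l g\pi_r$.
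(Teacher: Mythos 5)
Your proof is correct. The direction (1)\,$\Rightarrow$\,(2) is essentially identical to the paper's: decompose $g-\text{ex}_\star^m(g)$ via Proposition \ref{prop:sum} as a positive combination of elements of $\mathcal{I}_\star$ and invoke closure of $\bdomain$ under $\posi$. For (2)\,$\Rightarrow$\,(1) you take a genuinely different route. The paper observes that, by linearity of $\text{ex}_\star^m$ and Proposition \ref{prop:star}, each generator $h = g-\delta^\star_{l,r}\pi_l g\pi_r$ of $\mathcal{I}_\star$ satisfies $\text{ex}_\star^m(h)=0$, hence $h = h - \text{ex}_\star^m(h) \in \mathcal{V}_\star$; this is a one-line argument establishing the stronger set inclusion $\mathcal{I}_\star \subseteq \mathcal{V}_\star$, with no appeal to additivity of $\bdomain$ at all. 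You instead exploit that $\mathcal{V}_\star$ is closed under negation (since $\gambles_R$ is a vector space and $\text{ex}_\star^m$ is linear) to place both $g-\text{ex}_\star^m(g)$ and $\text{ex}_\star^m(g)-\delta^\star_{l,r}\pi_l g\pi_r$ in $\bdomain$, then telescope by additivity. Your argument is the same device the paper itself uses later in the proof of Proposition \ref{prop:2}, so it is entirely in the spirit of the framework; it is marginally longer and relies on one extra property of $\bdomain$ (additivity), whereas the paper's version isolates the cleaner structural fact that $\text{ex}_\star^m$ annihilates $\mathcal{I}_\star$. Your closing remark about the ``delicate point'' is well taken and correctly resolved.
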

 \begin{proof}
For (1 $\Rightarrow$ 2), 
by Proposition \ref{prop:sum}, we can write  $g-  \text{ex}_{\star}^m(g)=\frac{1}{m!m!}\sum_{\pi_l\pi_r} ( g -\delta^\star_{l,r}\pi_lg\pi_r)$.
  Since $\bdomain$ satisfies linearity and given
  $\mathcal{I}_{\star} \subseteq \bdomain$, then
  $g-  \text{ex}_{\star}^m(g) \in \bdomain$.
  
  For (2 $\Rightarrow$ 1), 
 by linearity of $\text{ex}_{\star}^m$ and Proposition \ref{prop:star}
  $$
  g-\delta^\star_{l,r}\pi_lg\pi_r -  \text{ex}_{\star}^m(g-\delta^\star_{l,r}\pi_lg\pi_r)=  g-\delta^\star_{l,r}\pi_lg\pi_r,
  $$
  which shows that $  g-\delta^\star_{l,r}\pi_lg\pi_r \in \bdomain$.
 \end{proof}

The following result also holds.
\begin{proposition}
\label{prop:2}
Let $\bdomain$ be an A-coherent  set of desirable gambles. 
Then, assuming $\bdomain$ is $\star$-exchangeable, the following claims hold for all  gambles $g,g'$:
\begin{enumerate}
 \item $g \in \bdomain$ iff  $\text{ex}_{\star}^m(g) \in \bdomain$;
  \item if $\text{ex}_{\star}^m(g)=\text{ex}_{\star}^m(g')$ then $g \in \bdomain$ iff  $g' \in \bdomain$.
\end{enumerate}
\end{proposition}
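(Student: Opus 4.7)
The plan is to chain the two items together: claim (2) is a direct corollary of claim (1), so the work lies entirely in proving (1), i.e.\ $g \in \bdomain \iff \text{ex}_{\star}^m(g) \in \bdomain$ under the standing $\star$-exchangeability of $\bdomain$. The two directions use different pieces of the preceding machinery, so I would handle them separately.

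For the forward direction ($\Rightarrow$), I would lean on Proposition~\ref{prop:1}: since $\bdomain$ is $\star$-exchangeable, it is $\star$-permutable, so whenever $g \in \bdomain$ we have $\delta^\star_{l,r}\pi_l g \pi_r \in \bdomain$ for every pair $\pi_l,\pi_r \in \mathbb{P}_m$. Then Proposition~\ref{prop:sum} writes $\text{ex}_{\star}^m(g)$ as
\[
\text{ex}_{\star}^m(g) \;=\; \frac{1}{m!\,m!}\sum_{\pi_l,\pi_r\in\mathbb{P}_m}\delta^\star_{l,r}\pi_l g \pi_r,
\]
a nonnegative conic combination of elements of $\bdomain$. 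Since $\bdomain=\posi(\bnonnegative \cup \assess)$ is closed under nonnegative linear combinations, this yields $\text{ex}_{\star}^m(g)\in\bdomain$.

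For the reverse direction ($\Leftarrow$), I would invoke Corollary~\ref{cor:exc}, which tells us $\mathcal{V}_{\star}\subseteq\bdomain$, hence $g-\text{ex}_{\star}^m(g)\in\bdomain$ for \emph{every} gamble $g$ (irrespective of whether $g\in\bdomain$). Assuming $\text{ex}_{\star}^m(g)\in\bdomain$, additivity of the conic hull gives
\[
g \;=\; \bigl(g-\text{ex}_{\star}^m(g)\bigr) + \text{ex}_{\star}^m(g) \;\in\; \bdomain.
\]

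Claim (2) is then immediate: if $\text{ex}_{\star}^m(g)=\text{ex}_{\star}^m(g')$, then by (1) we have $g\in\bdomain\iff \text{ex}_{\star}^m(g)\in\bdomain\iff \text{ex}_{\star}^m(g')\in\bdomain\iff g'\in\bdomain$. The only real subtlety — essentially the whole ``main obstacle'' — is justifying the reverse direction without appealing to subtraction (which an A-coherent set need not support): the trick is precisely that $g-\text{ex}_{\star}^m(g)$ is already guaranteed to be desirable by Corollary~\ref{cor:exc}, so one only needs the additive (conic) closure of $\bdomain$ to reassemble $g$.
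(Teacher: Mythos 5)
Your proposal is correct, but it runs the logical dependency between the two claims in the opposite direction from the paper. The paper proves claim (2) first, using only Corollary~\ref{cor:exc} and additivity: both $g'-\text{ex}_{\star}^m(g')$ and $\text{ex}_{\star}^m(g)-g=-g-\text{ex}_{\star}^m(-g)$ lie in $\mathcal{V}_{\star}\subseteq\bdomain$, so $g'=(g'-\text{ex}_{\star}^m(g))+(\text{ex}_{\star}^m(g)-g)+g\in\bdomain$; claim (1) is then read off as the special case $g'=\text{ex}_{\star}^m(g)$ via the idempotence Lemma~\ref{lem:projo}. You instead prove (1) first and obtain (2) as a corollary, and your forward direction of (1) invokes heavier machinery --- Proposition~\ref{prop:1} ($\star$-permutability) together with the averaging representation of Proposition~\ref{prop:sum} --- where the paper needs neither: the same membership $\text{ex}_{\star}^m(g)-g\in\mathcal{V}_{\star}$ that you already exploit for the reverse direction (applied to $-g$) gives $\text{ex}_{\star}^m(g)=(\text{ex}_{\star}^m(g)-g)+g\in\bdomain$ directly. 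So the paper's route is more economical, resting on a single ingredient ($\mathcal{V}_{\star}\subseteq\bdomain$ plus conic additivity), while yours is more constructive in exhibiting $\text{ex}_{\star}^m(g)$ explicitly as a nonnegative combination of desirable permuted gambles; your closing remark about avoiding subtraction and relying only on the pre-established desirability of $g-\text{ex}_{\star}^m(g)$ is exactly the right observation and is the same mechanism the paper uses.
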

\begin{proof} The proof is the same as for  \citep[Prop.10]{de2012exchangeability}. First notice that the first claim follows from the second, by taking $g':=\text{ex}_{\star}^m(g)$ and applying Lemma \ref{lem:projo}. For the second claim, assume $\text{ex}_{\star}^m(g)=\text{ex}_{\star}^m(g')$ and $g \in \bdomain$. Notice that $g'-  \text{ex}_{\star}^m(g')=g'-  \text{ex}_{\star}^m(g), -g-  \text{ex}_{\star}^m(-g)=   \text{ex}_{\star}^m(g) -g  \in \mathcal{V}_{\star}$. By Corollary \ref{cor:exc} and additivity, we obtain  $(g'-  \text{ex}_{\star}^m(g)) + (\text{ex}_{\star}^m(g) -g) + g = g' \in \bdomain$.
\end{proof}

We now consider the dual of an A-coherent set of 
$\star$-exchangeable gambles.

From Section \ref{sec:algorat}, to define the dual, we consider  the dual space $\gambles_R^*$ of all bounded linear functionals $L: \gambles_R \rightarrow \reals$. With the additional condition that linear functionals preserve the unitary gamble, the dual cone of an A-coherent $\bdomain\subset \gambles_R$  is given by
\begin{equation}
\label{eq:dualL1}
\bdomain^\circ=\left\{L \in \mathsf{S} \mid L(g)\geq0, ~\forall g \in \mathcal{G}\right\},
\end{equation}
where $\mathsf{S}=\{L \in \gambles_R^* \mid L(1)=1,~~L(h)\geq0 ~~\forall h \in \bnonnegative\}$ is the set of states.  

 \begin{definition}
Let $L \in \mathsf{S}$. 
We say that $L$ is \emph{$\star$-exchangeable} if  it belongs to the dual $\bdomain^\circ$ of an A-coherent  $\star$-exchangeable set of gambles $\bdomain$. 
 \end{definition}

\begin{proposition}
 Assume $L \in \mathsf{S}$. The following statements are equivalent: 
\begin{enumerate}
 \item $L$ is $\star$-exchangeable;
  \item  $L(f)=0$ for all $f \in \mathcal{I}_{\star}$.
 \item  $L(f)=0$ for all $f \in \mathcal{V}_{\star}$.
\end{enumerate}
\end{proposition}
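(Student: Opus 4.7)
The plan is to prove the three statements equivalent by a cyclic argument (1) $\Rightarrow$ (2) $\Rightarrow$ (3) $\Rightarrow$ (2) $\Rightarrow$ (1), where the middle two implications establish (2) $\Leftrightarrow$ (3) and the last implication constructs an A-coherent $\star$-exchangeable cone from scratch.

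For (1) $\Rightarrow$ (2), I would unwind the definitions: if $L \in \bdomain^\circ$ with $\bdomain$ A-coherent and $\star$-exchangeable, then $\mathcal{I}_{\star} \subseteq \bdomain$ gives $L(f) \geq 0$ for every $f \in \mathcal{I}_{\star}$. The key observation is that $\mathcal{I}_{\star}$ is closed under negation: if $f = g - \delta^\star_{l,r}\pi_l g \pi_r$ for some $g \in \gambles_R$, then $-f = (-g) - \delta^\star_{l,r}\pi_l(-g)\pi_r$ also lies in $\mathcal{I}_{\star}$ because $-g \in \gambles_R$. Hence $L(-f) \geq 0$ as well, forcing $L(f) = 0$.

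For (2) $\Leftrightarrow$ (3), both directions use linearity of $L$ together with the identities already established. The direction (2) $\Rightarrow$ (3) is a direct consequence of Proposition \ref{prop:sum}: each generator $g - \text{ex}_{\star}^m(g)$ of $\mathcal{V}_{\star}$ is a convex combination of generators of $\mathcal{I}_{\star}$, so vanishing of $L$ on the latter transfers to the former. The converse (3) $\Rightarrow$ (2) is the subtler step and is where Proposition \ref{prop:star} does the real work: from $\text{ex}_{\star}^m(\delta^\star_{l,r}\pi_l g \pi_r) = \text{ex}_{\star}^m(g)$, condition (3) yields both $L(g) = L(\text{ex}_{\star}^m(g))$ and $L(\delta^\star_{l,r}\pi_l g \pi_r) = L(\text{ex}_{\star}^m(g))$, and subtracting gives $L(g - \delta^\star_{l,r}\pi_l g \pi_r) = 0$.

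Finally, for (2) $\Rightarrow$ (1), the plan is to exhibit $\bdomain := \{g \in \gambles_R \mid L(g) \geq 0\}$ as a witness. Linearity of $L$ makes $\bdomain$ a convex cone closed under positive linear combinations; the assumption $L \in \mathsf{S}$ gives $\bnonnegative \subseteq \bdomain$; A-coherence ($-1 \notin \bdomain$) follows from $L(-1) = -L(1) = -1 < 0$; and hypothesis (2) gives $\mathcal{I}_{\star} \subseteq \bdomain$, so $\bdomain$ is $\star$-exchangeable by Definition \ref{def:1}. Since $L(g) \geq 0$ for every $g \in \bdomain$ by construction, $L \in \bdomain^\circ$, completing the cycle. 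The main obstacle I anticipate is the (3) $\Rightarrow$ (2) step, because vanishing on averaged gambles does not obviously imply vanishing on each summand; invoking Proposition \ref{prop:star} to equate $L(g)$, $L(\text{ex}_{\star}^m(g))$, and $L(\delta^\star_{l,r}\pi_l g \pi_r)$ is what unlocks it.
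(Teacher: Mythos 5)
Your proposal is correct and follows essentially the same route as the paper: the paper also proves (1)$\Leftrightarrow$(2) by noting that $\mathcal{I}_{\star}$ is closed under negation (so $L\geq 0$ on it forces $L=0$) and, for the converse, by exhibiting $\{g \in \gambles_R \mid L(g)\geq 0\}$ as an A-coherent $\star$-exchangeable witness. The only organisational difference is that you establish (2)$\Leftrightarrow$(3) directly from Propositions \ref{prop:sum} and \ref{prop:star}, whereas the paper delegates (1)$\Leftrightarrow$(3) to Corollary \ref{cor:exc}, whose proof rests on exactly those two propositions, so the mathematical content coincides.
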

\begin{proof}
We verify (1$\Leftrightarrow$2). If $L$ is  $\star$-exchangeable,  we know that $g- \delta^\star_{l,r} \pi_lg\pi_r, \delta^\star_{l,r} \pi_lg\pi_r-g\in \bdomain$, meaning that
$L(g- \delta^\star_{l,r} \pi_lg\pi_r)\geq 0$ and $-L(g- \delta^\star_{l,r} \pi_lg\pi_r)\geq 0$. Therefore $L(f)=L(g- \delta^\star_{l,r} \pi_lg\pi_r)=0$. 
For the other direction, assume that $L(f)=0$ for all $f \in \mathcal{I}_{\star}$, From $L$, by duality, we can define the set of desirable gambles $\{g \in \gambles_R: L(g)\geq0\}$. We have proven in \citep{Benavoli2019bb} that this is an A-coherent set of desirable gamble and, moreover, it includes 
$\mathcal{I}_{\star}$ by hypothesis. 
By Corollary \ref{cor:exc}, the equivalence (1$\Leftrightarrow$3) can be proven in a similar way.
\end{proof}
We recall the following well-know result (see e.g. \citep{holevo2011probabilistic}).
\begin{proposition}
\label{lem:tracepos}
Let $G$ be a Hermitian matrix; then $G\geq0$ if and only if
$Tr(SG)\geq0$ for all $S\geq0$.
\end{proposition}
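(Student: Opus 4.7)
The plan is to prove both implications separately, using the spectral decomposition of positive semidefinite matrices in one direction and a clever choice of test matrix in the other.

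For the forward implication ($G \geq 0 \Rightarrow Tr(SG) \geq 0$ for all $S \geq 0$), I would use the spectral decomposition of $S$. Since $S$ is Hermitian and positive semidefinite, I can write $S = \sum_j \mu_j w_j w_j^\dagger$ with $\mu_j \geq 0$ and $w_j \in \mathbb{C}^n$. Then by linearity and the cyclic property of the trace,
\begin{align*}
Tr(SG) = \sum_j \mu_j\, Tr(w_j w_j^\dagger G) = \sum_j \mu_j\, w_j^\dagger G w_j.
\end{align*}
Each summand is nonnegative: $\mu_j \geq 0$ by assumption, and $w_j^\dagger G w_j \geq 0$ because $G \geq 0$ by definition of positive semidefiniteness.

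For the reverse implication ($Tr(SG) \geq 0$ for all $S \geq 0$ $\Rightarrow$ $G \geq 0$), I would exploit the freedom in choosing $S$ by testing against rank-one projectors. Fix an arbitrary $v \in \mathbb{C}^n$ and take $S := vv^\dagger$, which is Hermitian and positive semidefinite. Then by hypothesis and the cyclic property of the trace,
\begin{align*}
0 \leq Tr(vv^\dagger G) = v^\dagger G v.
\end{align*}
Since $v$ was arbitrary, this shows $v^\dagger G v \geq 0$ for every $v \in \mathbb{C}^n$, which together with the assumed Hermiticity of $G$ means exactly that $G \geq 0$.

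There is no real obstacle here: the only subtle point is choosing the right family of test matrices $S$ for the reverse direction, and rank-one projectors $vv^\dagger$ are the natural choice since they reduce the trace expression directly to the quadratic form $v^\dagger G v$ that defines positive semidefiniteness. The forward direction is essentially a bookkeeping exercise with the spectral decomposition.
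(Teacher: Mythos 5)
Your proof is correct. Note that the paper does not actually prove this proposition --- it merely recalls it as a well-known result with a citation to Holevo --- so there is no in-paper argument to compare against; your two-directional argument (spectral decomposition of $S$ plus cyclicity of the trace for the forward implication, rank-one test matrices $vv^\dagger$ for the converse) is the standard proof and fills the gap completely.
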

We use the previous result to prove the following.
\begin{proposition}
\label{prop:exchangeL}
 Assume $L\in \mathsf{S}$. The following statements are equivalent: 
 \begin{enumerate}
  \item  $L$ is $\star$-exchangeable;
  \item $L\left(zz^\dagger-\frac{\delta^\star_{l,r}}{2}P_{\pi_r}zz^\dagger P^{\dagger}_{\pi_l}- \frac{\delta^\star_{l,r}}{2}P_{\pi_l}zz^\dagger P^{\dagger}_{\pi_r}\right)=0$ for all $\pi_l\pi_r \in \mathbb{P}_m$;
  \item $L\left(zz^\dagger-\Pi_{\star} zz^\dagger \Pi^\dagger_{\star}\right)=0$.
 \end{enumerate}
\end{proposition}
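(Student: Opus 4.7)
The plan is to reduce statements (2) and (3) to the characterisation established in the previous proposition, which says that $L$ is $\star$-exchangeable iff $L(f)=0$ for every $f\in\mathcal{I}_\star$, equivalently for every $f\in\mathcal{V}_\star$. What remains, then, is to translate these ``vanishing on a family of gambles'' conditions into the matrix identities stated in (2) and (3).

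The bridge is the same linearity identity underlying \eqref{eq:credaldef}: because $L$ is linear and every $g\in\gambles_R$ takes the form $z^\dagger G z = Tr(G\,zz^\dagger)$, applying $L$ entry-wise to $zz^\dagger$ (with the usual extension to complex entries) yields a Hermitian matrix $\rho=L(zz^\dagger)$ with $L(g)=Tr(G\rho)$. I will also invoke the following consequence of Proposition \ref{lem:tracepos}: if $M$ is Hermitian and $Tr(GM)=0$ for every Hermitian $G$, then $M=0$ (take $G=M$, so that $Tr(M^2)=0$ with $M^2\geq 0$ forces $M=0$).

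For (1)$\Leftrightarrow$(2), the previous proposition reduces (1) to $L(g-\delta^\star_{l,r}\pi_l g\pi_r)=0$ for every $g\in\gambles_R$ and every $\pi_l,\pi_r\in\mathbb{P}_m$. Using the definition of $\pi_l g\pi_r$ and cyclicity of the trace, this expression equals
$$Tr\left(G\cdot L\left(zz^\dagger-\frac{\delta^\star_{l,r}}{2}P_{\pi_r}zz^\dagger P^\dagger_{\pi_l}-\frac{\delta^\star_{l,r}}{2}P_{\pi_l}zz^\dagger P^\dagger_{\pi_r}\right)\right).$$
The matrix inside $L(\cdot)$ is Hermitian (it is a sum of a rank-one matrix and its adjoint), so its image under $L$ is Hermitian too; the trace-annihilation observation then yields (2), and the converse is immediate by running the chain backwards.

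For (1)$\Leftrightarrow$(3), by the previous proposition (1) is equivalent to $L(g-\text{ex}_\star^m(g))=0$ for every $g\in\gambles_R$. Since $\Pi_\star$ does not depend on $z$, linearity of $L$ together with $\text{ex}_\star^m(g)=Tr(G\,\Pi_\star zz^\dagger\Pi_\star^\dagger)$ gives
$$L(g-\text{ex}_\star^m(g))=Tr\left(G\cdot L(zz^\dagger-\Pi_\star zz^\dagger\Pi_\star^\dagger)\right),$$
and $\Pi_\star zz^\dagger\Pi_\star^\dagger$ is Hermitian because $\Pi_\star^\dagger=\Pi_\star$. Applying the trace-annihilation argument once more yields (3). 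The main technical care I anticipate is in making precise that $L$ acts entry-wise on matrix-valued $z$-dependent quantities and that the resulting matrix is Hermitian; once this is in place, the corollary of Proposition \ref{lem:tracepos} does all the work, and the two equivalences fall out in parallel.
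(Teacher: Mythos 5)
Your proposal is correct and follows essentially the same route as the paper: both pass through the identity $L(g)=Tr\bigl(G\,L(zz^\dagger)\bigr)$, rewrite $L(g-\delta^\star_{l,r}\pi_l g\pi_r)$ (resp.\ $L(g-\text{ex}_\star^m(g))$) as a trace against a Hermitian matrix, and conclude by an annihilation argument. The only cosmetic difference is that the paper restricts to $G\geq 0$ and invokes Proposition \ref{lem:tracepos} twice to get $M\geq 0$ and $-M\geq 0$, whereas you allow all Hermitian $G$ and take $G=M$ directly; both are valid, and your explicit treatment of (1$\Leftrightarrow$3) merely fills in what the paper leaves as ``similar''.
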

\begin{proof}
As before, we only verify the equivalence (1$\Leftrightarrow$2). 
Assume $L\in \mathsf{S}$ is $\star$-exchangeable and consider the set of gambles $\mathcal{A}=\{g-\delta^\star_{l,r}\pi_lg\pi_r: \pi_l\pi_r \in \mathbb{P}_m, G\geq0\}$ and $\mathcal{B}=\{\delta^\star_{l,r}\pi_lg\pi_r-g: \pi_l\pi_r \in \mathbb{P}_m,~G\geq0\}$.
 Since $L$ is $\star$-exchangeable, it follows that $L(f),L(f')\geq0$ for each $f \in \mathcal{A},f' \in \mathcal{B}$. This implies that
 $$
 \begin{aligned}
0&\leq L(g- \delta^\star_{l,r} \pi_lg\pi_r)\\
&=L(z^\dagger G z)-\frac{\delta^\star_{l,r}}{2}L\left(z^\dagger P^{\dagger}_{\pi_l}G P_{\pi_r}z+z^\dagger P^{\dagger}_{\pi_r}G P_{\pi_l}z\right)\\
 &=Tr\left(GL\left(zz^\dagger-\frac{\delta^\star_{l,r}}{2}P_{\pi_r}zz^\dagger P^{\dagger}_{\pi_l}- \frac{\delta^\star_{l,r}}{2}P_{\pi_l}zz^\dagger P^{\dagger}_{\pi_r}\right)\right)\\
  &=Tr\left(G\left(L(zz^\dagger)-\frac{\delta^\star_{l,r}}{2}P_{\pi_r}L(zz^\dagger) P^{\dagger}_{\pi_l}- \frac{\delta^\star_{l,r}}{2}P_{\pi_l}L(zz^\dagger) P^{\dagger}_{\pi_r}\right)\right)\\
\end{aligned}
 $$
 for each $\pi_l\pi_r \in \mathbb{P}_m,~G\geq0$.
 Similarly, we have that $0\leq L(-g+\delta^\star_{l,r}\pi_lg\pi_r)=-L(g- \delta^\star_{l,r} \pi_lg\pi_r)$.
We therefore conclude the proof of this implication by applying Proposition \ref{lem:tracepos}. 
 To prove the other direction, simply note that the second claim implies that $0= L(-g+ \delta^\star_{l,r}\pi_lg\pi_r)=-L(g-\delta^\star_{l,r}\pi_lg\pi_r)$.
\end{proof}

From \citep{Benavoli2019bb}, we know that $\rho:=L(zz^\dagger)$ is indeed a density matrix. Therefore,  Proposition \ref{prop:exchangeL} immediately implies the following.

\begin{corollary}\label{cor:symmprinc}
\label{co:exchangeRho} A density matrix $\rho \in \mathscr{S}=\{ \rho \in \He^{n\times n} \mid \rho\geq0,~~Tr(\rho)=1\}$ is  
$\star$- exchangeable if any of the following statement holds:
\begin{enumerate}
 \item $ \rho=\frac{\delta^\star_{l,r}}{2}P_{\pi_r} \rho P^{\dagger}_{\pi_l}+ \frac{\delta^\star_{l,r}}{2}P_{\pi_l} \rho P^{\dagger}_{\pi_r}$ for all $\pi_l\pi_r \in \mathbb{P}$;
  \item $ \rho=\Pi_{\star}\rho\Pi^{\dagger}_{\star}$.
\end{enumerate}
\end{corollary}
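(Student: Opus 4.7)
The plan is to obtain the corollary as a direct consequence of Proposition~\ref{prop:exchangeL} by making explicit the link between $L$ and $\rho$. Recall that for the $\star$-exchangeable linear functional $L \in \mathsf{S}$ associated with $\bdomain$, the density matrix is defined via $\rho := L(zz^\dagger)$, where the right-hand side is understood entrywise: the $(i,j)$ component of $\rho$ is the scalar $L\bigl((zz^\dagger)_{ij}\bigr)$, and each $(zz^\dagger)_{ij}$ is itself a permitted gamble in $\gambles_R$ (a homogeneous quadratic form in the entries of $z$). With this reading, a matrix identity of the form ``$L(M) = 0$'' for matrix-valued $M$ just means that $L$ annihilates every scalar entry of $M$.

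Under this interpretation, conditions~(2) and~(3) of Proposition~\ref{prop:exchangeL} are \emph{matrix} equations whose componentwise translation involves $L$ applied to each entry. First I would observe that since the permutation matrices $P_{\pi_l}, P_{\pi_r}, P^\dagger_{\pi_l}, P^\dagger_{\pi_r}$ have scalar entries that are constants with respect to the gambling variables, the linearity of $L$ (and the bilinear action of conjugation by constant matrices on matrix entries) gives, entrywise,
\[
L\!\left(P_{\pi_r} zz^\dagger P^{\dagger}_{\pi_l}\right) = P_{\pi_r}\, L(zz^\dagger)\, P^{\dagger}_{\pi_l} = P_{\pi_r}\,\rho\,P^{\dagger}_{\pi_l},
\]
and analogously with $\pi_l, \pi_r$ swapped, as well as $L(\Pi_\star zz^\dagger \Pi_\star^\dagger) = \Pi_\star \rho \Pi_\star^\dagger$. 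Substituting these identities into Proposition~\ref{prop:exchangeL}(2) and~(3) transforms them, entry by entry, into the matrix identities~(1) and~(2) of the corollary.

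Having done the translation, the sufficiency direction (``if (1) or (2) holds then $\rho$ is $\star$-exchangeable'') becomes immediate: suppose $\rho$ satisfies the matrix identity in~(1); then for every choice of $\pi_l,\pi_r \in \mathbb{P}_m$ the matrix $\rho - \tfrac{\delta^\star_{l,r}}{2}P_{\pi_r}\rho P^{\dagger}_{\pi_l} - \tfrac{\delta^\star_{l,r}}{2}P_{\pi_l}\rho P^{\dagger}_{\pi_r}$ is the zero matrix, which by the entrywise computation above equals the entrywise application of $L$ to the expression in Proposition~\ref{prop:exchangeL}(2). Hence $L$ satisfies condition~(2) of Proposition~\ref{prop:exchangeL}, so $L$ is $\star$-exchangeable, and therefore (by definition) $\rho = L(zz^\dagger)$ qualifies as a $\star$-exchangeable density matrix. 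The argument starting from~(2) of the corollary is identical, hooking instead into condition~(3) of Proposition~\ref{prop:exchangeL}.

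The only subtlety to watch carefully will be the entrywise bookkeeping: one must confirm that for a matrix expression $A\, zz^\dagger\, B$ with constant $A,B$, the action of $L$ commutes with left/right multiplication by $A,B$ at the level of each scalar entry. Since each entry of $A\, zz^\dagger\, B$ is a fixed linear combination (with coefficients from $A,B$) of the scalar gambles $(zz^\dagger)_{ij}$, linearity of $L$ yields this commutation immediately, making the main obstacle more a matter of notation than of substance. No further A-coherence or SOS machinery is needed beyond what Proposition~\ref{prop:exchangeL} already supplies.
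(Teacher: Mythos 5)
Your proposal is correct and follows essentially the same route as the paper: the paper derives the corollary as an ``immediate'' consequence of Proposition~\ref{prop:exchangeL} together with the identification $\rho = L(zz^\dagger)$, and the entrywise commutation $L(A\,zz^\dagger B) = A\,L(zz^\dagger)\,B$ for constant $A,B$ that you spell out is exactly the (implicit) step the paper relies on, already visible inside its proof of Proposition~\ref{prop:exchangeL}. You merely make explicit the bookkeeping the paper leaves to the reader.
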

Point 2 of Corollary \ref{cor:symmprinc} therefore derives the \textit{symmetrisation postulate} discussed in Section \ref{sec:symprinc} via duality from a set of A-coherent exchangeable gambles.

\begin{example}
\label{eq:3}
 Consider the density matrix
 \begin{align}
\nonumber
\rho&:=L\left(\left[\begin{smallmatrix}x_{11} x_{11}^{\dagger} x_{21} x_{21}^{\dagger} & x_{11}^{\dagger} x_{12} x_{21} x_{21}^{\dagger} & x_{11} x_{11}^{\dagger} x_{21}^{\dagger} x_{22} & x_{11}^{\dagger} x_{12} x_{21}^{\dagger} x_{22}\\x_{11} x_{12}^{\dagger} x_{21} x_{21}^{\dagger} & x_{12} x_{12}^{\dagger} x_{21} x_{21}^{\dagger} & x_{11} x_{12}^{\dagger} x_{21}^{\dagger} x_{22} & x_{12} x_{12}^{\dagger} x_{21}^{\dagger} x_{22}\\x_{11} x_{11}^{\dagger} x_{21} x_{22}^{\dagger} & x_{11}^{\dagger} x_{12} x_{21} x_{22}^{\dagger} & x_{11} x_{11}^{\dagger} x_{22} x_{22}^{\dagger} & x_{11}^{\dagger} x_{12} x_{22} x_{22}^{\dagger}\\x_{11} x_{12}^{\dagger} x_{21} x_{22}^{\dagger} & x_{12} x_{12}^{\dagger} x_{21} x_{22}^{\dagger} & x_{11} x_{12}^{\dagger} x_{22} x_{22}^{\dagger} & x_{12} x_{12}^{\dagger} x_{22} x_{22}^{\dagger}\end{smallmatrix}\right]\right)\\
\label{eq:densentbos}
&=\frac{1}{2}\begin{bmatrix}
 1 & 0 & 0 & 1\\
  0 & 0 & 0 & 0\\
   0 & 0 & 0 & 0\\
    1 & 0 & 0 & 1\\
\end{bmatrix}
\end{align}
For $P_{\pi_a}=I_4$ and $P_{\pi_b}$ as in \eqref{eq:Permexb}, we have
$$
\rho=P^{\dagger}_{\pi_a}\rho P_{\pi_b}=P^{\dagger}_{\pi_b}\rho P_{\pi_a}=P^{\dagger}_{\pi_b}\rho P_{\pi_b}.
$$
Therefore, $\rho$  is symmetrically exchangeable (it also satisfies  $\Pi_\mathrm{Sym}\rho\Pi_\mathrm{Sym}=\rho$.) Instead the matrix
\begin{equation}
 \label{eq:densentferm}
\rho=\frac{1}{2}\begin{bmatrix}
 0 & 0 & 0 & 0\\
  0 & 1 & -1 & 0\\
   0 & -1 & 1 & 0\\
    0 & 0 & 0 & 0\\
\end{bmatrix}
\end{equation}
is antisymmetrically exchangeable. It  satisfies $\Pi_\mathrm{Anti}\rho\Pi_\mathrm{Anti}=\rho$
as well as $\rho=-0.5(P^{\dagger}_{\pi_a}\rho P_{\pi_b}+P^{\dagger}_{\pi_b}\rho P_{\pi_a})=P^{\dagger}_{\pi_b}\rho P_{\pi_b}$.
\end{example}

\section{Updating}\label{sec:updat}
Let us assume we measure a subset of particles $x_1,\dots,x_{\check{m}}$ with $\check{m}\leq m$.
Quantum projection measurements can then be described by a collection  
$\{\Pi_i\}_{i=1}^{n\check{m}}$, with $\Pi_i \in \mathcal{H}^{n\check{m} \times n\check{m}}$, of
projection operators that satisfy the completeness equation $\sum_{i=1}^{n\check{m}} \Pi_i =I$. 

We recall the following definition from \cite[Sec.\;S.9.1]{Benavoli2019bb}.
\begin{definition}
 Let  $\bdomain$ be an A-coherent $\star$-exchangeable coherent set of desirable gambles, the set obtained as
\begin{equation}
\label{eq:condition}
\bdomain_{\Pi_i}=\left\{z^{\dagger}Gz  \mid  z^{\dagger}(\Pi_i\otimes I_{m-\check{m}})^{\dagger} G (\Pi_i\otimes I_{m-\check{m}})z \in \bdomain \right\}
\end{equation} 
is  called the {\bf set of desirable gambles conditional} on  $\Pi_i$.
\end{definition} 
We already know \citep{benavoli2016quantum} that updating preserves coherence. We now see that it also preserves exchangeability.

\begin{proposition}
Let  $\bdomain$ be an A-coherent $\star$-exchangeable coherent set of desirable gambles.
 Then $\bdomain_{\Pi_i}$ is an A-coherent $\star$-exchangeable coherent set of desirable gambles 
 on the variables $x_{\check{m}+1},\dots,x_{m}$ and its dual is
 \begin{equation}
\label{eq:rhobayes}
 \mathcal{M}_{\Pi_i}=\left\{\dfrac{(\Pi_i\otimes I_{m-\check{m}})^{\dagger} \rho (\Pi_i\otimes I_{m-\check{m}})}{Tr((\Pi_i\otimes I_{m-\check{m}})^{\dagger} \rho (\Pi_i\otimes I_{m-\check{m}}))} \Big|  \rho \in 
 \mathcal{M}\right\},
\end{equation} 
where $ \mathcal{M}$ is the dual of $\bdomain$.
\end{proposition}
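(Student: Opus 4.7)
The statement comprises three claims about $\bdomain_{\Pi_i}$: A-coherence, $\star$-exchangeability on the remaining variables, and the dual formula. A-coherence is established in \citep{benavoli2016quantum} (updating preserves coherence), so I address only the other two.

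For exchangeability, the key observation is that any permutation $\pi' \in \mathbb{P}_{m-\check m}$ of the trailing indices lifts to $\tilde\pi \in \mathbb{P}_m$ whose operator factorises as $P_{\tilde\pi} = I_{n^{\check m}} \otimes P_{\pi'}$ and has the same sign. Because this lift acts as the identity on the factor where $\Pi_i$ operates, $P_{\tilde\pi}(\Pi_i \otimes I_{m-\check m}) = (\Pi_i \otimes I_{m-\check m}) P_{\tilde\pi}$. Given any gamble $g$ with matrix $G$, set $H := (\Pi_i \otimes I_{m-\check m})^\dagger G (\Pi_i \otimes I_{m-\check m})$ and $h(z) := z^\dagger H z$. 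Using the commutation to carry $P_{\tilde\pi_l}^\dagger$ and $P_{\tilde\pi_r}$ past the outer factors $(\Pi_i \otimes I_{m-\check m})^\dagger$ and $(\Pi_i \otimes I_{m-\check m})$, one verifies that
\[
z^\dagger (\Pi_i \otimes I_{m-\check m})^\dagger M (\Pi_i \otimes I_{m-\check m}) z = h(z) - \delta^\star_{l,r}\, \tilde\pi_l h \tilde\pi_r(z),
\]
where $M$ is the matrix of $g - \delta^\star_{l,r}\tilde\pi_l g\tilde\pi_r$. Since the right-hand side lies in $\bdomain$ by $\star$-exchangeability of $\bdomain$ applied to the lifts $\tilde\pi_l,\tilde\pi_r$, the defining condition of $\bdomain_{\Pi_i}$ gives $g - \delta^\star_{l,r}\tilde\pi_l g\tilde\pi_r \in \bdomain_{\Pi_i}$, which is the desired exchangeability on $x_{\check m+1},\dots,x_m$.

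For the dual, I translate desirability into a trace condition via \eqref{eq:credaldef}. For $\rho \in \mathcal{M}$ and $g \in \bdomain_{\Pi_i}$ with matrix $G$, the definition of $\bdomain_{\Pi_i}$ combined with $\rho \in \mathcal{M}$ yields $Tr\bigl((\Pi_i \otimes I_{m-\check m})^\dagger G (\Pi_i \otimes I_{m-\check m})\rho\bigr) \geq 0$, which cyclicity rewrites as $Tr\bigl(G (\Pi_i \otimes I_{m-\check m})\rho (\Pi_i \otimes I_{m-\check m})^\dagger\bigr) \geq 0$. Dividing by the (strictly positive) normalisation $Tr\bigl((\Pi_i \otimes I_{m-\check m})^\dagger \rho (\Pi_i \otimes I_{m-\check m})\bigr)$, which equals the numerator's normalisation by cyclicity and Hermiticity of $\Pi_i$, produces a density matrix in $\bdomain_{\Pi_i}^\circ$; this gives one inclusion. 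The converse inclusion follows from the bijection between A-coherent sets and their duals from Section~\ref{sec:algorat}, noting that any conditional dual element is already supported on the range of $\Pi_i \otimes I_{m-\check m}$ and is therefore fixed by the conjugation.

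\textbf{Main obstacle.} The technical heart is the matrix bookkeeping in the exchangeability step: correctly matching the two-sided symmetrised permutation $\tilde\pi_l(\cdot)\tilde\pi_r$ of the conjugated matrix $H$ with the conjugation of the permuted matrix of $g$, so that the permutation can be ``peeled off'' the outer projectors. Once the commutation of $P_{\tilde\pi}$ with $\Pi_i \otimes I_{m-\check m}$ is isolated, this reduces to a routine manipulation. The dual argument is then a direct consequence of cyclicity of the trace together with the pre-established duality.
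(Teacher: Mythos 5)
Your proposal matches the paper's proof in essentials: like the paper, you delegate A-coherence and the dual formula \eqref{eq:rhobayes} to the prior results on updating, and you establish $\star$-exchangeability of $\bdomain_{\Pi_i}$ by lifting a permutation of the trailing indices to one of the form $I\otimes P_{\pi'}$, commuting it past $\Pi_i\otimes I_{m-\check{m}}$ via the tensor-product identity, and peeling the permutation off the conjugated gamble so that $\star$-exchangeability of $\bdomain$ applies. Your additional trace-cyclicity sketch of the dual is a harmless extra (the paper simply cites \citep{Benavoli2019bb} for it), so the argument is correct and follows the same route.
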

\begin{proof}
 In \cite[Sec.\;S.9.1]{Benavoli2019bb} we have already proved that $\bdomain_{\Pi_i}$ is coherent and that $ \mathcal{M}_{\Pi_i}$
 is the dual of $\bdomain_{\Pi_i}$. Therefore, we only need to prove that $\bdomain_{\Pi_i}$ is a  $\star$-exchangeable coherent set of desirable gambles  on the variables $x_{\check{m}+1},\dots,x_{m}$.
 This means we need to prove that  $z^{\dagger}Gz-\Big(\tfrac{\delta^\star_{l,r}}{2}z^{\dagger}(I_{\check{m}}\otimes P^{m-\check{m}}_{\pi_l})^{\dagger}G(I_{\check{m}}\otimes P^{m-\check{m}}_{\pi_r})z+\tfrac{\delta^\star_{l,r}}{2}z^{\dagger}(I_{\check{m}}\otimes P^{m-\check{m}}_{\pi_l})^{\dagger}G(I_{\check{m}}\otimes P^{m-\check{m}}_{\pi_r})z\Big) \in \bdomain_{\Pi_i}$ for each gamble $z^{\dagger}Gz$.
 This gamble is in $\bdomain_{\Pi_i}$ provided that:
 \begin{align}
 \nonumber
  &z^{\dagger}(\Pi_i\otimes I_{m-\check{m}})^{\dagger}\Big[G-\tfrac{\delta^\star_{l,r}}{2}(I_{\check{m}}\otimes P^{m-\check{m}}_{\pi_l})^{\dagger} G (I_{\check{m}}\otimes P^{m-\check{m}}_{\pi_r})\\
   \nonumber
  &- \tfrac{\delta^\star_{l,r}}{2}(I_{\check{m}}\otimes P^{m-\check{m}}_{\pi_r})^{\dagger} G (I_{\check{m}}\otimes P^{m-\check{m}}_{\pi_l})\Big](\Pi_i\otimes I_{m-\check{m}})z,
 \end{align}
 is in $\bdomain$. By exploiting the following property of the tensor product
  $$
({I_{2}} \otimes {B} )({A} \otimes {I_{1}} )=({A} \otimes {I_{1}} )({I_{2}} \otimes {B} ),
 $$
we need to verify that
 \begin{align}
 \nonumber
  &z^{\dagger}\Big[(\Pi_i\otimes I_{m-\check{m}})^{\dagger}G(\Pi_i\otimes I_{m-\check{m}})\\
   \nonumber
  &-\tfrac{\delta^\star_{l,r}}{2}(I_{\check{m}}\otimes P^{m-\check{m}}_{\pi_l})^{\dagger}(\Pi_i\otimes I_{m-\check{m}})^{\dagger} G (\Pi_i\otimes I_{m-\check{m}})(I_{\check{m}}\otimes P^{m-\check{m}}_{\pi_r})\\
   \nonumber
  &- \tfrac{\delta^\star_{l,r}}{2}(I_{\check{m}}\otimes P^{m-\check{m}}_{\pi_r})^{\dagger} (\Pi_i\otimes I_{m-\check{m}})^{\dagger}G (\Pi_i\otimes I_{m-\check{m}})(I_{\check{m}}\otimes P^{m-\check{m}}_{\pi_l})\Big]z,
 \end{align}
 is in $\bdomain$. This is true because $ \bdomain$ is $\star$-exchangeable.
\end{proof}

\section{Entanglement}\label{sec:ent}
Unlike systems  consisting of distinguishable\footnote{Spatially well-separated indistinguishable particles can be distinguished.} particles, in the case of identical particles the notion  of  entanglement  is  still  under debate
(see e.g. \citep{benatti2014entanglement}). The reason being that, for instance, the two matrices in Example \ref{eq:3} are entangled density matrices for distinguishable particles  and, at the same time, they also satisfy the symmetry and anti-symmetry relationship of identical particles. Are those density matrices entangled in the (anti-)symmetric case?

For distinguishable particles, our gambling formulation of QM allows us to formulate and detect entangled density matrices thorough a Dutch book (sure loss)  \citep{Benavoli2019bb}.
This goes as follows. Given a density matrix $\tilde{\rho}$, we can first compute its dual (an A-coherent  set of desirable gambles):
$$
\bdomain:=\{g(z,z)=z^{\dagger}Gz: L(g)=Tr(G\tilde{\rho})\geq0\}
$$
and then consider its ``classical'' extension
$$
\domain := \text{posi}(\bdomain \cup \mathcal{L}_R^{\geq}).
$$
Hence, $\domain$ is coherent
(under the standard axioms of desirability) provided that $\domain \cap \mathcal{L}_R^{<}=\emptyset$. 

As done in \citep[Sec.4.4]{Benavoli2019bb}, we thus state the following definition. 
\begin{definition}\label{def:ent}
 Let $\tilde{\rho}$ be a  density matrix. Then $\tilde{\rho}$ is entangled if  $\domain \cap \mathcal{L}_R^{<} \neq \emptyset$ ($\domain$ does not avoid sure loss).
\end{definition}

If $\tilde{\rho}$ is not entangled, the bounded linear functionals in its dual can be written as an integral with respect to a probability measure \citep{Benavoli2019bb}:\footnote{To do that, we need to perform another natural extension to the space of all gambles $\text{posi}(\domain \cup \mathcal{L}^{\geq})$.}
 \begin{equation}
  \label{eq:ccc}
   \rho=\int_{ \Omega} zz^{\dagger} d\mu(z).
 \end{equation}
 
As an immediate consequence of Definition \ref{def:ent} and Equation \eqref{eq:ccc} we get:
\begin{proposition}
 Let $\tilde{\rho}$ be a  density matrix, then $\tilde{\rho}$ is not entangled iff it is  a truncated moment matrix (with respect to a standard probability measure $\mu(z)$).
\end{proposition}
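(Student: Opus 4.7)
The plan is to prove each direction separately, exploiting the duality between A-coherent sets of desirable gambles and density matrices already established in the paper, together with a classical Riesz-type representation theorem for coherent sets of gambles. The structure is really that the forward implication has effectively been argued in the paragraph preceding Equation~\eqref{eq:ccc}, while the reverse implication is a short sure-loss computation; both therefore unfold as immediate consequences of Definition~\ref{def:ent} and \eqref{eq:ccc}, as the statement already announces.

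For the forward direction ($\tilde{\rho}$ not entangled $\Rightarrow$ $\tilde{\rho}$ is a truncated moment matrix), I start from the hypothesis that $\domain=\posi(\bdomain\cup\mathcal{L}_R^{\geq})$ avoids sure loss under classical desirability, where $\bdomain=\{z^{\dagger}Gz : Tr(G\tilde{\rho})\geq 0\}$. Following the footnote preceding \eqref{eq:ccc}, I further extend $\domain$ to the whole space of bounded gambles via $\posi(\domain\cup\mathcal{L}^{\geq})$, which remains classically coherent. By the standard representation theorem for coherent sets of desirable gambles on bounded functions over $\pspace$, the linear functional $L(g)=Tr(G\tilde{\rho})$ extends to an expectation induced by a (finitely additive) probability measure $\mu$ on $\pspace$, so that $L(g)=\int_{\pspace} g(z,z)\, d\mu(z)$ for every $g\in\mathcal{L}_R$. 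Specialising to $g(z,z)=z^{\dagger}Gz$ and using linearity of the trace, I obtain $Tr(G\tilde{\rho})=\int_{\pspace} z^{\dagger}Gz\, d\mu(z)=Tr\!\left(G\int_{\pspace} zz^{\dagger}\, d\mu(z)\right)$ for every Hermitian $G$, whence $\tilde{\rho}=\int_{\pspace} zz^{\dagger}\, d\mu(z)$.

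For the reverse direction, suppose $\tilde{\rho}=\int_{\pspace} zz^{\dagger}\, d\mu(z)$ for some probability measure $\mu$, and argue by contradiction: assume some $h\in\domain\cap\mathcal{L}_R^{<}$ exists. Writing $h$ as a conic combination $h=\sum_i \lambda_i f_i+\sum_j \beta_j k_j$ with $\lambda_i,\beta_j\geq 0$, $f_i(z,z)=z^{\dagger}F_i z\in\bdomain$ and $k_j\in\mathcal{L}_R^{\geq}$, I observe that $\int f_i\, d\mu=Tr(F_i\tilde{\rho})\geq 0$ by the assumed integral representation of $\tilde{\rho}$ together with the definition of $\bdomain$, while $\int k_j\, d\mu\geq 0$ because $k_j\geq 0$ pointwise. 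Hence $\int h\, d\mu\geq 0$. But $h\in\mathcal{L}_R^{<}$ forces $\max h<0$, so $\int h\, d\mu\leq \max h<0$, a contradiction.

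The main obstacle is really concentrated in the forward direction, and more precisely in the extension-plus-representation step: passing from the A-coherent $\bdomain$ living on the restricted space of quadratic-form gambles $\mathcal{L}_R$ to a classically coherent set on \emph{all} bounded gambles, and then invoking a Riesz/Hausdorff moment-type theorem to produce the measure $\mu$. This is exactly the point developed in detail in \citep{Benavoli2019bb} and to which the preceding footnote appeals; once the integral representation is granted, the reverse implication is the short computation above.
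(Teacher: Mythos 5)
Your proposal is correct and follows essentially the same route the paper intends: the forward direction is exactly the extension-plus-representation argument behind Equation~\eqref{eq:ccc} (deferred to \citep{Benavoli2019bb}), and the reverse direction is the short no-sure-loss computation implicit in Definition~\ref{def:ent}. The paper itself states the proposition as an ``immediate consequence'' without writing out a proof, so your version simply makes both implications explicit along the same lines.
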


The question is therefore how we can extend this result to the case of indistinguishable particles.
In this aim, we need to consider a constraint: \textit{not all Dutch books can be constructed}. In a system of indistinguishable particles, \textit{physical observables} (that is, gambles which can be evaluated through an experiment or, equivalently, physically realisable gambles) must be invariant under all permutations of the $m$ identical particles \citep[XIV.C-4-a]{cohen2020quantum}:
 \begin{equation}
  \label{eq:physical}
g(z,z)=z^{\dagger}Gz = z^{\dagger}\Pi_{\star}G\Pi_{\star}z ~~\forall~z.
 \end{equation}
Based on this constraint, we can thus obtain the following result.
\begin{proposition}
\label{prop:entaindi}
 Let $\tilde{\rho}$ be an entangled $\star$-exchangeable density matrix,  then  the following two statements are equivalent:
 \begin{itemize}
  \item there exists a physical observable  $g(z,z)$ which belongs to 
  $\mathcal{L}_R^{<}$  such that $Tr(G \tilde{\rho})\geq0$;
  \item  $\tilde{\rho}$ cannot be written as
  $$
  \int_{ \Omega} \frac{\Pi_{\star}zz^{\dagger}\Pi_{\star}^{\dagger}}{Tr(\Pi_{\star}zz^{\dagger}\Pi_{\star}^{\dagger})} d\mu(z).
  $$
  for any probability measure $\mu(z)$.
 \end{itemize}
\end{proposition}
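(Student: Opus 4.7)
The plan is to split the biconditional into its two implications; $(1)\Rightarrow(2)$ follows from a direct computation exploiting the physicality constraint, while $(2)\Rightarrow(1)$ rests on a Hahn--Banach separation argument in the space of Hermitian matrices, after which the separating functional is pushed into the physical subspace by using the $\star$-exchangeability of $\tilde{\rho}$ provided by Corollary~\ref{co:exchangeRho}.

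For $(1)\Rightarrow(2)$ I argue by contraposition. Assume $\tilde{\rho}$ admits the integral representation, and let $g(z,z)=z^{\dagger}Gz$ be a physical observable with $g\in\mathcal{L}_{R}^{<}$. Pushing the trace inside the integral by linearity and applying the physicality condition \eqref{eq:physical} to rewrite the numerator yields
$$
Tr(G\tilde{\rho})
=\int_{\Omega}\frac{z^{\dagger}\Pi_{\star}G\Pi_{\star}z}{Tr(\Pi_{\star}zz^{\dagger}\Pi_{\star}^{\dagger})}\,d\mu(z)
=\int_{\Omega}\frac{z^{\dagger}Gz}{Tr(\Pi_{\star}zz^{\dagger}\Pi_{\star}^{\dagger})}\,d\mu(z).
$$
The numerator is strictly negative since $g\in\mathcal{L}_{R}^{<}$, and the denominator is strictly positive on the support of $\mu$, so the integral is strictly negative, contradicting $Tr(G\tilde{\rho})\ge 0$.

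For $(2)\Rightarrow(1)$ I would introduce the set of ``$\star$-separable'' density matrices
$$
\mathcal{S}_{\star}:=\left\{\int_{\Omega}\frac{\Pi_{\star}zz^{\dagger}\Pi_{\star}^{\dagger}}{Tr(\Pi_{\star}zz^{\dagger}\Pi_{\star}^{\dagger})}\,d\mu(z):\mu\text{ a probability measure on }\Omega\right\}
$$
and verify that it is a closed convex subset of the finite-dimensional real vector space $\mathcal{H}^{n\times n}$ equipped with the trace inner product; convexity is immediate from mixing of measures, and closedness follows from compactness of $\Omega$ together with weak-$*$ compactness of probability measures supported on the open set $\{z:\Pi_{\star}z\neq 0\}$. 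Since by hypothesis $\tilde{\rho}\notin\mathcal{S}_{\star}$, Hahn--Banach yields a Hermitian $H$ and a scalar $c$ with $Tr(H\tilde{\rho})>c\geq Tr(H\rho_{s})$ for every $\rho_{s}\in\mathcal{S}_{\star}$. Setting $G:=H-cI$ gives $Tr(G\tilde{\rho})>0$ and, specialising $\rho_{s}$ to the extreme rays $\Pi_{\star}zz^{\dagger}\Pi_{\star}^{\dagger}/Tr(\Pi_{\star}zz^{\dagger}\Pi_{\star}^{\dagger})$, also $z^{\dagger}\Pi_{\star}G\Pi_{\star}z\leq 0$ for every product $z$ with $\Pi_{\star}z\neq 0$. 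Replacing $G$ by its physical projection $\tilde{G}:=\Pi_{\star}G\Pi_{\star}$ makes \eqref{eq:physical} automatic; by $\star$-exchangeability of $\tilde{\rho}$, $Tr(\tilde{G}\tilde{\rho})=Tr(G\,\Pi_{\star}\tilde{\rho}\Pi_{\star}^{\dagger})=Tr(G\tilde{\rho})>0$, while $z^{\dagger}\tilde{G}z=(\Pi_{\star}z)^{\dagger}G(\Pi_{\star}z)\leq 0$ for every product $z$.

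\textbf{The main obstacle} is upgrading this last inequality to the \emph{strict} one $\max g<0$ demanded by membership in $\mathcal{L}_{R}^{<}$, because product configurations with $\Pi_{\star}z=0$ (the fermionic Pauli-exclusion points of Example~\ref{eq:1}) automatically give $z^{\dagger}\tilde{G}z=0$. I would resolve this by interpreting the strict-negativity condition over the \emph{physically realisable} possibility space $\{z:\Pi_{\star}z\neq 0\}$, which is the natural reading in context since by the symmetrisation postulate itself configurations with $\Pi_{\star}z=0$ do not describe any physical state; a small perturbation $\tilde{G}\mapsto\tilde{G}-\varepsilon\Pi_{\star}$ then upgrades the bound to a strict inequality on this physically meaningful subset while preserving $Tr(\tilde{G}\tilde{\rho})>0$ for $\varepsilon$ small enough.
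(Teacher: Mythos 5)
Your first implication coincides with the only computation the paper actually writes out: the paper likewise uses \eqref{eq:physical} to replace $G$ by $\Pi_{\star}G\Pi_{\star}$ under the integral and concludes $\sup_{\mu}Tr(G\sigma_{\mu})<0$; your version, which normalises pointwise by $Tr(\Pi_{\star}zz^{\dagger}\Pi_{\star}^{\dagger})\in(0,1]$ before integrating, is if anything the cleaner way to justify the paper's passage from the unnormalised to the normalised integral. For the converse implication the paper gives no argument beyond a citation to \citep[Sec.4.4]{Benavoli2019bb}, so your Hahn--Banach separation of $\tilde{\rho}$ from the convex set $\mathcal{S}_{\star}$, followed by projecting the witness into the physical subspace via $\tilde{G}=\Pi_{\star}G\Pi_{\star}$ and invoking $\tilde{\rho}=\Pi_{\star}\tilde{\rho}\Pi_{\star}^{\dagger}$ from Corollary \ref{co:exchangeRho}, is a genuine addition rather than a restatement of the paper; it is the standard entanglement-witness route and its overall structure is sound.

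The gap is exactly the one you flag, and your proposed repair does not close it under the paper's own definitions. Membership in $\mathcal{L}_R^{<}$ requires $\max g<0$ with the maximum taken over all of $\pspace$, and in the antisymmetric case the exclusion set $\{z=\otimes_j x_j:\Pi_{\mathrm{Anti}}z=0\}$ is nonempty (Example \ref{eq:1}); every physical observable vanishes there, and your perturbed witness only satisfies $z^{\dagger}(\tilde{G}-\varepsilon\Pi_{\star})z\leq-\varepsilon\|\Pi_{\star}z\|^{2}$, which tends to $0$ as $\Pi_{\star}z\to0$, so its maximum over $\pspace$ is $0$, not negative. Reading the strict inequality over $\{z:\Pi_{\star}z\neq0\}$ is a modification of the statement (and of $\mathcal{L}_R^{<}$), not a proof step --- and the same observation shows that the first bullet of Proposition \ref{prop:entaindi} is unsatisfiable for fermions as literally written, so the obstacle you hit lives in the proposition itself rather than in your argument. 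A secondary caveat: your closedness claim for $\mathcal{S}_{\star}$ needs care in the fermionic case, because the map $z\mapsto\Pi_{\star}zz^{\dagger}\Pi_{\star}^{\dagger}/Tr(\Pi_{\star}zz^{\dagger}\Pi_{\star}^{\dagger})$ is undefined and discontinuous on the exclusion set and weak-$*$ limits of admissible measures can concentrate mass there; you should either check that the closure adds only matrices already of the required form, or separate $\tilde{\rho}$ from the closure.
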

\begin{proof}
The results follow by \citep[Sec.4.4]{Benavoli2019bb}. We only need to observe that if $g(z,z)<0$, then
for all probability measures $\mu$:
$$
\begin{aligned}
 0&> \int_{ \Omega} z^{\dagger}Gz d\mu(z)\\
 &=\int_{ \Omega} z^{\dagger}\Pi_{\star}G\Pi_{\star}z d\mu(z)\\
  &=\int_{ \Omega} Tr(G\Pi_{\star}zz^{\dagger}\Pi_{\star}) d\mu(z)\\
&= Tr\left(G\int_{ \Omega}\Pi_{\star}zz^{\dagger}\Pi_{\star} d\mu(z)\right),
\end{aligned}
$$
the second equality follows by the assumption that $g$ is a \textit{physical observable} and thus Equation \eqref{eq:physical}. The above inequality implies that
$$
Tr\left(G\int_{ \Omega}\frac{\Pi_{\star}zz^{\dagger}\Pi_{\star}^{\dagger}}{Tr(\Pi_{\star}zz^{\dagger}\Pi_{\star}^{\dagger})} d\mu(z)\right)<0.
$$
Let $\sigma_{\mu}$ be the result  of the integral (a density matrix). Since the above inequality must hold for all $\mu$, we can rewrite this condition as $\sup_{\mu} Tr(G \sigma_{\mu})<0$. Therefore, any density matrix $\tilde{\rho}$ such that $ Tr(G \tilde{\rho})\geq 0$ must be entangled: it cannot be expressed as an expectation with respect to a probability measure $\mu$.
\end{proof}
 The above result means that particles are entangled when classical probability coherence and quantum A-coherence disagree.
 The first statement tells us that we can use a Dutch book (sure loss) to detect entanglement, but only if the  Dutch book is a physical observable.
Notice that Proposition \ref{prop:entaindi} is in agreement with definitions of entanglement, and ways to detect it, as discussed for instance in \citep{iemini2013quantifying,iemini2013computable,reusch2015entanglement} (in particular see \citep[Eq.\;12]{reusch2015entanglement}).
 
 \begin{example}
 We apply Proposition \ref{prop:entaindi} to the previous  two  particles Example \ref{eq:3}.

\underline{Fermions:} consider the atomic charge (Dirac's delta) $\mu=\delta_{\tilde{z}}
$ with $\tilde{z}=[1,0]^T\otimes[0,1]^T= [0,1,0,0]^T$.
Note that,
$$
\int_{ \Omega} \frac{\Pi_\mathrm{Anti}zz^{\dagger}\Pi_\mathrm{Anti}^{\dagger}}{Tr(\Pi_\mathrm{Anti}zz^{\dagger}\Pi_\mathrm{Anti}^{\dagger})} \delta_{\tilde{z}}(z).
=\frac{1}{2}\begin{bmatrix}
0 & 0 & 0 & 0 \\
0 & 1 & -1 & 0 \\
0 & -1 & 1 & 0 \\
0 & 0 & 0 & 0 \\                                                                                    \end{bmatrix}
$$
which is the density matrix in \eqref{eq:densentferm}. From Proposition \ref{prop:entaindi}, we conclude that that the density matrix is not entangled.

\underline{Bosons:} consider the atomic charge (Dirac's delta) $\delta_{\tilde{z}}
$ with $\tilde{z}=\frac{1}{2}[-\iota,1]^T\otimes[\iota,1]^T= [0.5,-0.5\iota,0.5\iota,0.5]^T$, where $\iota$ is the complex unit.
Note that,
$$
\int_{ \Omega} \frac{\Pi_\mathrm{Sym}zz^{\dagger}\Pi_\mathrm{Sym}^{\dagger}}{Tr(\Pi_\mathrm{Sym}zz^{\dagger}\Pi_\mathrm{Sym}^{\dagger})} \delta_{\tilde{z}}(z).
=\frac{1}{2}\begin{bmatrix}
1 & 0 & 0 & 1 \\
0 & 0 & 0 & 0 \\
0 & 0 & 0 & 0 \\
1 & 0 & 0 & 1 \\                                                                                    \end{bmatrix}
$$
which is the density matrix in \eqref{eq:densentbos}. From Proposition \ref{prop:entaindi}, we conclude that that the density matrix is not entangled.
\end{example}

\section{Conclusions}
In this paper we showed that we can derive the \emph{symmetrization postulate} for indistinguishable particles in QM using the framework of exchangeable desirable gambles. 
Therefore, once again, we proved that QM is a theory of probability: it can be derived by the principles of coherence and computation plus structural assessments of exchangeability. Moreover, we showed that, also in the case of indistinguishable particles, entanglement can be defined (and detected) as a Dutch book: the clash between the QM notion of rationality (which accounts for the principle of computation) and the classical notion of rationality (which assumes infinite computational resources).

We obtained these results by exploiting symmetrization procedures to model structural assessments of indistinguishability. This approach, which is called ``first quantization'' in QM, has a main drawback: it includes  redundant information. More specifically, it potentially allows us to gamble on the state of a single particle which is not a physical observable (it is impossible  in the first place to tell which particle is which). This constitutes a well-known limit in QM.  As an example, we had to impose the condition on physical observables given by Equation \eqref{eq:physical} in order to obtain Proposition \ref{prop:entaindi}.
 
 In QM, there is another formalism to work with indistinguishable particles, called \textit{second quantization}.
Its language allows one to ask the following question ``How many particles are there in each state?''. Since this  formalism does not refer to the labelling of particles, it contains no redundant information.
 As future work, we plan to provide a gambling formulation of QM for the \textit{second quantization},  exploring the connection with the  count vectors formalism developed by  \cite{de2012exchangeability}.

\section*{Author Contributions}
All authors have contributed equally to the manuscript.

\bibliography{biblio}
 
\end{document}